\documentclass[letterpaper,11pt]{article}
\usepackage{fullpage}
\usepackage[numbers]{natbib}

\usepackage{amsmath}
\usepackage{amsfonts}
\usepackage{amssymb}
\usepackage{amsthm}
\usepackage[noend]{algorithmic}
\usepackage{algorithm}
\usepackage{color}

\def\draft{1}
\def\submit{1}

\ifnum\draft=1 
    \def\ShowAuthNotes{1}
\else
    \def\ShowAuthNotes{0}
\fi

\ifnum\ShowAuthNotes=1
\newcommand{\authnote}[2]{{ \footnotesize \bf{\color{red}[#1's Note: {\color{blue}#2}]}}}
\else
\newcommand{\authnote}[2]{}
\fi

\ifnum\submit=0 
\newcommand{\forsubmit}[1]{#1}
\newcommand{\forreals}[1]{}
\else
\newcommand{\forreals}[1]{#1}
\newcommand{\forsubmit}[1]{}
\fi

\ifnum\draft=1 
\else
\fi

%
%

\newtheorem{theorem}{Theorem}[section]

\newtheorem{remark}{Remark}[section]
\newtheorem{lemma}[theorem]{Lemma}

\newtheorem{fact}{Fact}[section]
\newtheorem{observation}{Observation}[section]

\theoremstyle{definition}
\newtheorem{definition}{Definition}[section]

%
%



\newcommand{\Esymb}{\mathbb{E}}
\newcommand{\Psymb}{\mathbb{P}}

\DeclareMathOperator*{\E}{\Esymb}

\DeclareMathOperator*{\ProbOp}{\Psymb r}
\renewcommand{\Pr}{\ProbOp}



\renewcommand{\hat}{\widehat}

\newcommand{\R}{\mathbb{R}}


%
%

\newcommand{\remove}[1]{}
\newcommand{\eps}{\varepsilon}
\renewcommand{\epsilon}{\varepsilon}

\newcommand{\bR}{\mathbb{R}}


\begin{document}


\title{Take it or Leave it: Running a Survey when Privacy Comes at a Cost}
\author{Katrina Ligett\thanks{Computing and Mathematical Sciences and Division of the Humanities and Social Sciences, Caltech. Email: {\tt katrina@caltech.edu}}\and
Aaron Roth\thanks{Department of Computer and Information Sciences, University of Pennsylvania. Email: {\tt aaroth@cis.upenn.edu}}}

\maketitle

\begin{abstract}
  In this paper, we consider the problem of estimating a potentially sensitive
  (individually stigmatizing) statistic on a
  population. In our model, individuals are concerned about their privacy, and
  experience some \emph{cost} as a function of their privacy
  loss. Nevertheless, they would be willing to participate in the
  survey if they were compensated for their privacy cost.
 These cost functions are not publicly known, however, nor do we make
 Bayesian assumptions about their form or distribution.
Individuals
  are rational and will misreport their costs for privacy if doing so
  is in their best interest. Ghosh and Roth recently showed in
  this setting, when costs for privacy loss may be correlated with
  private types, if individuals value \emph{differential privacy},
  no individually rational direct revelation mechanism can compute
  any non-trivial estimate of the population statistic. In this paper,
  we circumvent this impossibility result by proposing a modified
  notion of how individuals experience cost as a function of their
  privacy loss, and by giving a mechanism which does not operate by
  direct revelation. Instead, our mechanism has the ability to randomly approach
  individuals from a population and offer them a take-it-or-leave-it
  offer. This is intended to model the abilities of a surveyor who may
  stand on a street corner and approach passers-by.
\end{abstract}


\section{Introduction}
Suppose you are a researcher and you would like to collect data from a
population and perform an analysis on it.  Presumably, you would like your
sample, or at least your analysis, to be representative of the
underlying population.  Unfortunately, individuals' decisions
of whether to participate in your study may skew your data: 
perhaps people with an embarrassing
medical condition are less likely to respond to a survey whose results
might reveal their condition.

One could try to incentivize participation by offering a reward for
participation, but this only serves to skew the survey in favor of
those who value the reward over the costs of participating (e.g.,
hassle, time, detrimental effects of what the study might reveal),
which again may not result in a representative sample.  Ideally, you
would like to be able to find out exactly how much you would have
to pay each individual to participate in your survey (her ``value'', akin to a reservation price),
and offer her exactly that much.  Unfortunately,
traditional mechanisms for eliciting player values truthfully are not
a good match for this setting because a player's value may be correlated
with her private information (for example, individuals with an
embarrassing medical condition might want to be paid extra in order to
reveal it). Standard mechanisms based on the revelation principle are therefore no longer
truthful.  In fact, Ghosh and Roth \cite{GR11} showed that when
participation costs can be arbitrarily correlated with private data,
no direct revelation mechanism can simultaneously offer non-trivial
accuracy guarantees and be individually rational for agents who value
their privacy.

Voluntarily provided data is a cornerstone of medical studies, opinion
polls, human subjects research, and marketing studies.
Some data collectors, such as the US Census, can get around the issue
of voluntary participation by legal mandate, but this is rare.  How
might we still get analyses that represent the underlying population?

Statisticians and econometricians have of course attempted to address
selection and non-response bias issues.  One approach is to assume that
the effect of unobserved variables has mean zero.  The
Nobel-prize-winning Heckman correction method \cite{Heckman} and the
related literature instead attempt to correct for non-random samples
by formulating a theory for the probabilities of the unobserved
variables and using the theorized distribution to extrapolate a
corrected sample. The limitations of these approaches is precisely in
the assumptions they make on the structure of the data. Is it possible
to address these issues without needing to ``correct'' the observed
sample, while simultaneously minimizing the cost of running the
survey?

\subsection{Contributions}
The present paper provides a new model for
incentivizing participation in data analyses when the
subjects' value for their private information may be correlated with
the sensitive information itself.  In this model, we present a
mechanism for eliciting responses that allows us to compute accurate
statistical estimates, addressing the survey problem described above. We model costs for individual's participation using the tools and language of differential privacy; our mechanisms are not specific to user costs defined in terms of differential privacy, but offering guarantees of this type can significantly decrease costs when compared to mechanisms that ask for unrestricted access to user data.

Of course a second issue beyond representative participation is
\emph{truthful} participation. We require that rational agents be positively incentivized to participate in our mechanism, but once we get their participation, there is the question of whether they will answer the survey question correctly. One solution is to assume that survey
responses are verifiable or cannot easily be fabricated (e.g., the surveyor requires
documentation of answers, or, more invasively, actually collects a
blood sample from the participant).  While the approach we present in
this paper works well with such verifiable responses, in addition, our
framework provides a formal ``almost-truthfulness'' guarantee, that
the expected utility a participant could gain by lying in the survey is at most very
small. Note that this is a different issue than participation, which is voluntary and always strictly incentivized.

\subsection{Differential Privacy}
Over the past decade, differential privacy has emerged as a compelling
privacy definition, and has received considerable attention. While we
provide formal definitions in Section~\ref{sec:prelims}, differential privacy
essentially bounds the sensitivity of an algorithm's output to arbitrary changes
in individual's data. In particular, it requires that the probability
of \emph{any} possible outcome of a computation be insensitive to the addition or removal of one person's data from the
input. Among differential privacy's many strengths are (1) that
differentially private computations are approximately
truthful~\cite{MT07} (which gives the almost-truthfulness guarantee
mentioned above), and (2) that differential privacy is a property of
the \emph{mechanism} and is independent of the input to the mechanism.

\paragraph{How Individuals Should Value Privacy: The ``Paradox'' of Differential Privacy }
A natural approach taken by past work (e.g., \cite{GR11}) in attempting to model
the cost incurred by participants in some computation on their private
data is to model individuals as experiencing cost as a function of the
\emph{differential privacy} parameter $\epsilon$ associated with the
mechanism using their data.  We argue here, however, that modeling an
individual's cost for privacy loss solely as any function $f(\epsilon)$ of
the privacy parameter $\epsilon$ would lead to unnatural agent
behavior and incentives.

Consider an individual who is approached on a street corner and offered
a deal: she can participate in a survey in exchange for \$100, or she
can decline to participate and walk away. She is given the guarantee
that both her participation decision and her input to the survey (if
she opts to participate) will be treated in an
$\epsilon$-differentially private manner. In the usual language of differential privacy, what does this mean?
Formally, her input to the mechanism will be the tuple containing her
participation decision and her private type. If she decides not to
participate, the mechanism output is not allowed to depend on her private
type,
and switching her participation decision to ``yes'' cannot change the
probability of any outcome by more than a small multiplicative
factor. Similarly, fixing her participation decision as ``yes'', any
change in her stated type can only change the probability of any
outcome by a small multiplicative factor.

How should she respond to this offer? A natural conjecture is that she
would experience a higher privacy cost for participating in the survey
than not (after all, if she does not participate, her private type has
\emph{no} effect on the output of the mechanism -- she need not even have provided it), and that she should
weigh that privacy cost against the payment offered, and make her
decision accordingly.

However, if her privacy cost is solely some function $f(\epsilon)$ of
the privacy parameter of the mechanism, she is actually incentivized
to behave quite differently. Since the privacy parameter $\epsilon$ is
\emph{independent} of her input, her cost $f(\epsilon)$ will be
identical \emph{whether she participates or not}. Indeed, her
participation decision does not affect her privacy cost, and only
affects whether she receives payment or not, and so she will always
opt to participate in exchange for any positive payment. Further, she
experiences the full privacy cost of the mechanism simply by being
asked whether she wishes to participate in the survey, even if her
private data is never used!
%

We view this as problematic and as not modeling the true
decision-making process of individuals: in reality, the full privacy
cost of a survey may not have been experienced by the individual before her
data has been given. Furthermore, individuals are unlikely to accept
arbitrarily low offers for their private data.  One potential route to
addressing this ``paradox'' would be to move away from modeling the
value of privacy solely in terms of an input-independent privacy
guarantee.  This is the approach taken by \cite{CCKMV12}. Instead, we retain the framework of differential privacy,
but introduce a new model for how individuals reason about
the cost of privacy loss. Roughly, we model individuals' costs
as a function of the differential
privacy parameter of the portion of the mechanism they participate in,
and assume they do not experience cost from the parts
of the mechanism that process data that they have not provided (or
that have no dependence on their data). For our application, we
consider mechanisms that operate in two stages: first, they aggregate
participation decisions by making take-it-or-leave-it offers, but do not 
compute on the private types collected from the participating
individuals. The ``output'' of this stage of the mechanism is the
observed behavior of the surveyor: the number of people approached and
the prices offered.\footnote{For clarity, in our analysis, we also include as part of the output of this stage a noisy (privacy-preserving) count of the number of people accepting the highest offer we make.} The second stage of the mechanism takes as input
the reported private types of the individuals who elected to
participate. In our model, individuals who declined to provide their private types
do not experience any cost in this second stage of the mechanism.

\subsection{Related Work}
In recent years, differential privacy, which was introduced in a series of papers \cite{DMNS06, BDMN05}, has emerged as the standard solution concept for privacy in the theoretical computer science literature. There is by now a very large literature on this fascinating topic, which we do not attempt to survey here, instead referring the interested reader to a survey by Dwork \cite{DworkSurvey}.

McSherry and Talwar proposed that differential privacy could itself be used as a \emph{solution concept} in mechanism design \cite{MT07}. They observed that any differentially private mechanism is approximately truthful, while simultaneously having some resilience to collusion. Using differential privacy as a solution concept (as opposed to dominant strategy truthfulness) they gave some improved results in a variety of auction settings. Gupta et al. also used differential privacy as a solution concept in auction design \cite{GLMRT10}.

This literature was recently extended by a series of elegant papers by Nissim, Smorodinsky, and Tennenholtz \cite{NST12}, Xiao \cite{Xia11}, Nissim, Orlandi, and Smorodinsky \cite{NOS12}, and Chen et al. \cite{CCKMV12}. This line of work observes (\cite{NST12, Xia11}) that differential privacy does not lead to exactly truthful mechanisms, and indeed that manipulations might be easy to find, and then seeks to design mechanisms that are exactly truthful even when agents explicitly value privacy (\cite{Xia11, NOS12, CCKMV12}).

Feigenbaum, Jaggard, and Schapira considered (using a different notion of privacy) how the implementation of an auction can affect how many bits of information are leaked about individuals' bids \cite{Feigenbaum}. Specifically, they study to what extent information must be leaked in second price auctions and in the millionaires problem. We consider somewhat orthogonal notions of privacy and implementation that make our results incomparable.

Most related to this paper is an orthogonal direction initiated by
Ghosh and Roth \cite{GR11}. Ghosh and Roth consider the problem of a
data analyst who wishes to buy data from a population for the purposes
of computing an accurate estimate of some population
statistic. Individuals experience cost as a function of their privacy
loss (as measured by differential privacy), and must be incentivized
by a truthful mechanism to report their true costs. In particular,
\cite{GR11} show that if individuals experience disutility as a
function of differential privacy, and if costs for privacy can be
arbitrarily correlated with private types, then no individually
rational direct revelation mechanism can achieve any nontrivial
accuracy. In this paper, we overcome this impossibility result by
abandoning the direct revelation model in favor of a model in which a
surveyor can approach random individuals from the population and offer
them take-it-or-leave-it offers, and by introducing a slightly
different model for how individuals experience cost as a function of
privacy. We note that the conversation on how individuals should
experience costs as a function of privacy is ongoing. Ghosh and Roth
\cite{GR11} suggest that privacy costs should be a function of the
differential privacy parameter of the mechanism; \cite{Xia11} suggest
that such costs should be a function of the mutual information between
the agent type and the mechanism output (such a measure requires a
prior on player types); Nissim, Orlandi, and Smorodinsky suggest
that agent costs should merely be \emph{upper bounded} by a linear
function of the privacy parameter of the mechanism \cite{NOS12}; and Chen et
al. \cite{CCKMV12} suggest that the appropriate measure should be
outcome dependent (although inspired by differential privacy). The
model in this paper adds to this fruitful conversation.

Concurrently with this work, Roth and Schoenebeck \cite{RS12} consider
the problem of deriving Bayesian optimal survey mechanisms for
computing minimum variance unbiased estimators of a population
statistic from individuals who have costs for participating in the
survey. Although the motivation of this work is similar, the results
are orthogonal. In this paper, we take a prior-free approach and model
costs for private access to data using the formalism of differential
privacy. In contrast, \cite{RS12} takes a Bayesian approach, assuming
a known prior over agent costs, and does not attempt to provide any
privacy guarantee, and instead only seeks to pay individuals for their
participation.

Also contemporaneously with this work, Fleischer and Lyu \cite{FL12}
consider the problem of computing a statistic over a population where
privacy costs may be correlated with private types. Their approach is
fundamentally different from ours, however, because they crucially
assume the surveyor has perfect knowledge of the correlation between
types and costs. In the present work, we make no such Bayesian assumptions.

\section{Preliminaries}
\label{sec:prelims}

The approach to modeling privacy that we use
is the by-now-standard model of \emph{differential privacy}.


We think of databases as being an ordered multiset of elements from some
universe $X$: $D \in X^*$ in which each element corresponds to the
data of a different individual. We call two databases \emph{neighbors}
if they differ in the data of only a single individual.
\begin{definition}
Two databases of size $n$ $D, D' \in X^n$ are \emph{neighbors} with respect to individual $i$ if for all $j \neq i \in [n]$, $D_j = D'_j$.
\end{definition}
We can now define \emph{differential privacy}. Intuitively,
differential privacy promises that the output of a mechanism does not
depend too much on any single individual's data.
\begin{definition}[\cite{DMNS06}]
A randomized algorithm $A$ which takes as input a database $D \in X^*$ and outputs an element of some arbitrary range $R$ is $\epsilon_i$-differentially private with respect to individual $i$ if for all databases $D, D' \in X^*$ that are neighbors with respect to individual $i$, and for all subsets of the range $S \subseteq R$, we have:
$$\Pr[A(D) \in S] \leq \exp(\epsilon_i)Pr[A(D') \in S]$$
$A$ is $\epsilon_i$-minimally differentially private with respect to individual $i$ if $\epsilon_i = \inf(\epsilon \geq 0)$ such that $A$ is $\epsilon$-differentially private with respect to individual $i$. When it is clear from context, we will simply write $\epsilon_i$-differentially private to mean $\epsilon_i$-minimally differentially private.
\end{definition}

A simple and useful fact is that \emph{post-processing} does not affect differential privacy guarantees.
\begin{fact}
\label{fact:postprocessing}
Let $A:X^*\rightarrow R$ be a randomized algorithm which is
$\epsilon_i$-differentially private with respect to individual $i$,
and let $f:R\rightarrow T$ be an arbitrary (possibly randomized)
function mapping the range of $A$ to some abstract range $T$. Then the
composition $g\circ f: X^*\rightarrow T$ is
$\epsilon_i$-differentially private with respect to individual $i$.
\end{fact}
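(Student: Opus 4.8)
The plan is to establish the claim first for \emph{deterministic} post-processing maps $f$, and then lift to randomized $f$ by conditioning on $f$'s internal coins; here I read the stated composition as $f\circ A : X^*\to T$, i.e.\ run $A$ and then apply $f$ to its output.

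For the deterministic case, I would fix an arbitrary pair of databases $D,D'$ that are neighbors with respect to $i$ and an arbitrary (measurable) event $S\subseteq T$, and observe that pulling $S$ back through $f$ converts a statement about the composed mechanism into a statement about $A$ alone:
\[
\Pr[(f\circ A)(D)\in S] \;=\; \Pr[A(D)\in f^{-1}(S)] \;\le\; \exp(\epsilon_i)\,\Pr[A(D')\in f^{-1}(S)] \;=\; \exp(\epsilon_i)\,\Pr[(f\circ A)(D')\in S],
\]
where the middle inequality is exactly the $\epsilon_i$-differential privacy of $A$ invoked on the set $f^{-1}(S)\subseteq R$. Since $D,D',S$ were arbitrary, $f\circ A$ is $\epsilon_i$-differentially private with respect to $i$.

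For a randomized $f$, I would model it as first sampling internal randomness $\rho$ (independently of the coins of $A$) and then applying a deterministic map $f_\rho:R\to T$. Conditioning on $\rho$, applying the deterministic case to each $f_\rho$, and then averaging over $\rho$ gives
\[
\Pr[(f\circ A)(D)\in S] \;=\; \E_{\rho}\!\left[\Pr[f_\rho(A(D))\in S]\right] \;\le\; \E_{\rho}\!\left[\exp(\epsilon_i)\Pr[f_\rho(A(D'))\in S]\right] \;=\; \exp(\epsilon_i)\,\Pr[(f\circ A)(D')\in S],
\]
which is the desired bound. The ``$\epsilon_i$-minimally differentially private'' version then follows immediately: we have shown that any $\epsilon$ witnessing differential privacy of $A$ with respect to $i$ also witnesses it for $f\circ A$, so the infimum defining the minimal parameter for $f\circ A$ is no larger than that for $A$.

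There is no substantive obstacle in this argument; the only points requiring (routine) care are that $f^{-1}(S)$ is a legitimate event to feed into the privacy guarantee of $A$ — which holds whenever $f$ is measurable, an implicit standing assumption on the model — and that $f$'s randomness is independent of $A$'s, so that the conditioning step is valid. An alternative, coin-free phrasing writes the output probability of $(f\circ A)(D)$ as $\int_R \Pr[f(r)\in S]\,d\mu_D(r)$, where $\mu_D$ is the law of $A(D)$, and uses that $\mu_D \le \exp(\epsilon_i)\,\mu_{D'}$ pointwise as measures; both routes are equivalent.
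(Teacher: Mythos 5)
Your proof is correct and is the standard post-processing argument: handle deterministic $f$ by pulling the event back through $f^{-1}$ and invoking the privacy of $A$ on that preimage, then handle randomized $f$ by conditioning on its (independent) internal coins and averaging. The paper states this fact without any proof, so there is no authorial argument to compare against; your reading of the (typo-ridden) statement as the composition $f\circ A$ rather than ``$g\circ f$'' is the intended one, and your handling of measurability, independence of coins, and the ``minimally differentially private'' variant is exactly the routine care the claim requires.
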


A useful distribution is the \emph{Laplace} distribution.
\begin{definition}[The Laplace Distribution]
The Laplace Distribution with mean $0$ and scale $b$ is the
distribution with probability density function: $\textrm{Lap}(x | b)
= \frac{1}{2b}\exp( -\frac{|x|}{b})$.
We will sometimes write $\textrm{Lap}(b)$ to denote the Laplace distribution with scale $b$, and will sometimes abuse notation and write $\textrm{Lap}(b)$ simply to denote a random variable $X \sim \textrm{Lap}(b)$.
\end{definition}
A fundamental result in data privacy is that perturbing low sensitivity queries with Laplace noise preserves $\epsilon$-differential privacy.
\begin{theorem}[\cite{DMNS06}]
\label{thm:laplace-privacy}
Suppose $f:X^*\rightarrow \R^k$ is a function such that for all
adjacent databases $D$ and $D'$, $||f(D) - f(D')||_1
\leq 1$. Then the procedure which on input $D$ releases $f(D) +
(X_1,\ldots,X_k)$, where each $X_i$ is an independent draw from a
$\textrm{Lap}(1/\epsilon)$ distribution, preserves
$\epsilon$-differential privacy.
\end{theorem}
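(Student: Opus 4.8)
This is the classical privacy guarantee for the Laplace mechanism, so the plan is the standard one: reduce the set-based differential-privacy inequality to a pointwise bound on output densities, and then control that bound with the triangle inequality. First I would fix neighboring databases $D, D' \in X^n$ and write $M$ for the released mechanism, $M(D) = f(D) + (X_1,\dots,X_k)$ with the $X_i$ independent draws from $\textrm{Lap}(1/\epsilon)$. Since the additive noise has a density on $\R^k$, so does $M(D)$, and by independence of the coordinates this density factors as $p_D(z) = \prod_{i=1}^k \frac{\epsilon}{2}\exp(-\epsilon|z_i - f(D)_i|)$, and likewise $p_{D'}$ with $f(D)$ replaced by $f(D')$.

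The heart of the argument is the pointwise likelihood-ratio bound. For any $z \in \R^k$,
$$\frac{p_D(z)}{p_{D'}(z)} = \exp\!\Big(\epsilon\sum_{i=1}^k\big(|z_i - f(D')_i| - |z_i - f(D)_i|\big)\Big) \le \exp\!\Big(\epsilon\sum_{i=1}^k |f(D)_i - f(D')_i|\Big) = \exp\!\big(\epsilon\|f(D)-f(D')\|_1\big),$$
where the inequality applies the reverse triangle inequality in each coordinate. By the sensitivity hypothesis $\|f(D)-f(D')\|_1 \le 1$, the right-hand side is at most $\exp(\epsilon)$.

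To conclude, I would integrate this pointwise bound over an arbitrary measurable $S \subseteq \R^k$: $\Pr[M(D)\in S] = \int_S p_D(z)\,dz \le \exp(\epsilon)\int_S p_{D'}(z)\,dz = \exp(\epsilon)\Pr[M(D')\in S]$. Since $D$, $D'$, and $S$ were arbitrary, this is exactly $\epsilon$-differential privacy. There is no real obstacle here; the only steps that need care are keeping track of the signs inside the absolute values so that the ratio is bounded by $+\epsilon\|f(D)-f(D')\|_1$ (the \emph{reverse} triangle inequality, not the forward one), and — for a fully rigorous treatment — observing that $M(D)$ and $M(D')$ are mutually absolutely continuous, so the density ratio is well defined almost everywhere and integrating the inequality over $S$ is legitimate.
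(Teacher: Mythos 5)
Your proof is correct and is the standard density-ratio argument for the Laplace mechanism: factor the product density, bound the pointwise likelihood ratio via the (reverse) triangle inequality by $\exp(\epsilon\|f(D)-f(D')\|_1)\le\exp(\epsilon)$, and integrate over $S$. The paper itself gives no proof of this statement---it imports it directly from the cited prior work---so there is nothing to compare against beyond noting that your argument is exactly the classical one and is sound.
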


We consider a (possibly infinite) collection of individuals drawn from
some distribution over types $\mathcal{D}$. There exists a finite
collection of private types $T$. Each individual is described by a
private type $t_i \in T$, as well as a nondecreasing cost function
$c_i:\bR_+\rightarrow \bR_+$ that measures her disutility
$c_i(\epsilon_i)$ for having her private type used in a computation
with a guarantee of $\epsilon_i$-differential privacy.

Agents interact with the mechanism as follows. The mechanism will be
endowed with the ability to select an individual $i$ uniformly at
random (without replacement) from $\mathcal{D}$, by making a call to a
\emph{population oracle} $\mathcal{O}_\mathcal{D}$. Once an individual
$i$ has been sampled, the mechanism can present $i$ with a
\emph{take-it-or-leave-it offer}, which is a tuple $(p_i,
\epsilon_i^1, \epsilon_i^2) \in \mathbb{R}_+^3$.  $p_i$ represents an
offered payment, and $\epsilon_i^1$ and $\epsilon_i^2$ represent two
privacy parameters. The agent then makes her participation decision,
which consists of one of two actions: she can \emph{accept} the offer,
or she can \emph{reject} the offer. If she accepts the offer, she
communicates her (verifiable) private type $t_i$ to the auctioneer,
who may use it in a computation which is $\epsilon_i^2$-differentially
private with respect to agent $i$. In exchange she receives payment
$p_i$. If she rejects the offer, she need not communicate her type, and
receives no payment. Moreover, the mechanism guarantees that the bit
representing whether or not agent $i$ accepts the offer is used only
in an $\epsilon_i^1$-differentially private way, regardless of her
participation decision.

\subsection{How Cost Functions Relate to Types}
We model agents as caring only about the privacy of their private type
$t_i$, but they may also experience a cost when information about
their cost function $c_i(\epsilon_i)$ is revealed---because of possible
correlations between costs and types. To capture this phenomenon while
still avoiding making Bayesian assumptions, we take the following
approach.

Implicitly, there is some (possibly randomized) process $f_i$ which
maps a user's private type $t$ to his cost function $c_i = f_i(t)$,
but we make no assumption about the form of this map. This takes a
worst case view -- i.e., we have no prior over individuals' cost
functions. For point of reference, in a Bayesian model, the function
$f$ would represent the user's marginal distribution over costs
conditioned on its type. We make no Bayesian assumptions, but
introduce this function $f$ so as to formalize our model of utility
for privacy, which is crucial to the results we give in this paper.

When an individual $i$ is faced with a take-it-or-leave-it offer, her
type is used in two computations: first, her participation decision
(which may be a function of her type) is used in some computation
$A_1$ which will be $\epsilon_i^1$-differentially private. Then, if
she accepts the offer, she allows her type to be used in some
computation $A_2$ which may be $\epsilon_i^2$-differentially private.

We model individuals as \emph{caring about the
  privacy of their cost function only insofar as it reveals
  information about their private type}. Because their cost function
is determined as a function of their private type, if $P$ is some
predicate over cost functions, if $P(c_i) = P(f_i(t_i))$ is used in a
way that guarantees $\epsilon_i$-differential privacy, then the agent
experiences a privacy loss of some $\epsilon_i' \leq \epsilon_i$
(which corresponds to a disutility of some $c_i(\epsilon_i') \leq
c_i(\epsilon_i)$). We write $g_i(\epsilon_i) = \epsilon_i'$ to denote
this correspondence between a given privacy level and the effective
privacy loss due to use of the cost function at that level of
privacy. For example, if $f_i$ is a deterministic injective mapping,
then $f_i(t_i)$ is as disclosive as $t_i$ and so $g_i(\epsilon_i) =
\epsilon_i$. On the other hand, if $f_i$ produces a distribution
independent of the user's type, then $g_i(\epsilon_i) = 0$ for all
$\epsilon_i$.

We note that the cost function model we describe here can also incorporate other costs of participating in a survey not related to privacy concerns, such as valuing time or disliking talking to strangers. One can fold in such considerations (which contribute constant factors independent of the privacy parameter) without changing our model or the qualitative nature of our results, but such a change might result in nonlinear cost functions, violating a simplifying assumption made in the analysis of our mechanism's cost.

\subsection{Cost Experienced from a Take-It-Or-Leave-It Mechanism}
\begin{definition}
A Private Take-It-Or-Leave-It Mechanism is composed of two algorithms, $A_1$ and $A_2$. $A_1$ makes offer $(p_i, \epsilon_i^1, \epsilon_i^2)$ to individual $i$ and receives a binary \emph{participation decision}. If player $i$ \emph{participates}, she receives a payment of $p_i$ in exchange for her private type $t_i$. $A_1$ performs no computation on $t_i$. The privacy parameter $\epsilon_i^1$ for $A_1$ is computed by viewing the input to $A_1$ to be the vector of participation decisions, and the output to be the number of individuals to whom offers were made, the offers $(p_i, \eps_i^1, \eps_i^2)$, and an $\epsilon_i^1$-differentially private count of the number of players who chose to participate at the highest price we offer.

Following the termination of $A_1$, a separate algorithm $A_2$ computes on the reported private types of these participating individuals and outputs a real number $\hat{s}$. The privacy parameter $\epsilon_i^2$ of $A_2$ is computed by viewing the input to be the private types of the participating agents, and the output as $\hat{s}$.
\end{definition}

We assume that agents have quasilinear utility (cost) functions: given a
payment $p_i$, an agent $i$ who declines a take-it-or-leave-it offer
(and thus receives no payment) and whose participation decision is
used in an $\epsilon_i^1$-differentially private way experiences
utility $u_i = - c_i(g_i(\epsilon_i^1)) \geq -c_i(\epsilon_i^1)$. An
agent who accepts a take-it-or-leave-it offer and receives payment
$p$, whose participation decision is used in an
$\epsilon_i^1$-differentially private way, and whose private type is
subsequently used in an $\epsilon_i^2$-differentially private way
experiences utility $u_i = p_i - c_i(\epsilon_i^2 + g_i(\epsilon_i^1)
\geq p_i - c_i(\epsilon_i^2 + \epsilon_i^1)$.


\begin{remark}
  While this model of costs, including the function $g_i$, might seem
  complex, note that it captures the correct cost model in a number of
  situations. Suppose, for example, that costs have correlation $1$
  with types, and $c_i(\epsilon)= \infty$ if and only if $t_i = 1$,
  otherwise $c_i(\epsilon) \ll p_i$. Then, asking whether an agent
  wishes to accept an offer $(p_i, \epsilon_i, \epsilon_i)$ is
  equivalent to asking whether $t_i = 1$ or not, and those accepting
  the offer are in effect answering this question twice. In this case,
  we have $g_i(\epsilon) = \epsilon$. On the other hand, if types and
  costs are completely uncorrelated, then there is no privacy loss
  associated with responding to a take-it-or-leave-it offer. This is
  captured by setting $g_i(\epsilon) = 0$.
\end{remark}

Note that by accepting an offer, agent $i$ achieves utility at least
$$p_i - c_i(\epsilon_i^2 + g_i(\epsilon_i^1)) \geq p_i - c_i(\epsilon_i^2 + \epsilon_i^1).$$
By rejecting an offer, agent $i$ might achieve negative utility,
bounded by:
$$0 \geq -c_i(g_i(\epsilon_i^1)) \geq -c_i(\epsilon_i^1).$$

Agents wish to maximize their utility, and so the following lemma is immediate:
\begin{lemma}
\label{lem:truthful}
A utility-maximizing agent $i$ will accept a take-it-or-leave-it offer $(p_i, \epsilon_i^1, \epsilon_i^2)$ when $p_i \geq c_i(\epsilon_i^1 + \epsilon_i^2)$
\end{lemma}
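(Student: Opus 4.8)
The plan is to show that accepting is a (weakly) dominant-strategy response whenever the stated inequality holds, by directly comparing the two utility expressions the paper has already derived. Since the agent has only two available actions, accept or reject, it suffices to exhibit that the utility from accepting is at least the utility from rejecting under the hypothesis $p_i \geq c_i(\epsilon_i^1 + \epsilon_i^2)$.

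First I would recall the two bounds established just before the lemma: accepting yields utility at least $p_i - c_i(\epsilon_i^2 + g_i(\epsilon_i^1))$, and rejecting yields utility at least (in fact, bounded above by) $-c_i(g_i(\epsilon_i^1)) \le 0$. The monotonicity of $c_i$ and the fact that $g_i(\epsilon_i^1) \le \epsilon_i^1$ give $c_i(\epsilon_i^2 + g_i(\epsilon_i^1)) \le c_i(\epsilon_i^2 + \epsilon_i^1)$, so accepting yields utility at least $p_i - c_i(\epsilon_i^2 + \epsilon_i^1)$. Under the hypothesis this quantity is nonnegative, hence at least the rejection utility, which is at most $0$. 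Therefore a utility-maximizing agent weakly prefers to accept.

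The only subtlety worth spelling out is that the comparison should really be between the \emph{actual} utilities the agent faces, not merely the bounds: the accepting utility is exactly $p_i - c_i(\epsilon_i^2 + g_i(\epsilon_i^1))$ and the rejecting utility is exactly $-c_i(g_i(\epsilon_i^1))$. So the cleanest argument is to observe that $p_i - c_i(\epsilon_i^2 + g_i(\epsilon_i^1)) \ge -c_i(g_i(\epsilon_i^1))$ is equivalent to $p_i \ge c_i(\epsilon_i^2 + g_i(\epsilon_i^1)) - c_i(g_i(\epsilon_i^1))$, and since $c_i$ is nondecreasing the right-hand side is at most $c_i(\epsilon_i^2 + g_i(\epsilon_i^1)) \le c_i(\epsilon_i^1 + \epsilon_i^2)$ (using $c_i \ge 0$ and $g_i(\epsilon_i^1) \le \epsilon_i^1$). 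Thus the hypothesis $p_i \ge c_i(\epsilon_i^1 + \epsilon_i^2)$ implies the desired inequality, and the agent accepts.

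There is essentially no obstacle here; the statement is labeled ``immediate'' in the text precisely because it is a one-line consequence of quasilinearity, the monotonicity of $c_i$, and the bound $g_i(\epsilon_i^1) \le \epsilon_i^1$. The main thing to be careful about is bookkeeping: tracking which direction the inequalities on $c_i$ and $g_i$ point, and making sure the non-negativity of $c_i$ is used correctly when discarding the $-c_i(g_i(\epsilon_i^1))$ term. I would present the proof as a short chain of inequalities with one sentence of justification per step, and note in passing that the preference becomes strict whenever $p_i > c_i(\epsilon_i^1+\epsilon_i^2)$ or whenever $c_i$ is strictly increasing, which is what justifies the later claim that participation is strictly incentivized.
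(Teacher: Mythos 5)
Your proof is correct and follows essentially the same route as the paper: compare the lower bound $p_i - c_i(\epsilon_i^1+\epsilon_i^2)$ on the accepting utility (obtained via $g_i(\epsilon_i^1)\le\epsilon_i^1$ and monotonicity of $c_i$) with the upper bound $0$ on the rejecting utility. Your extra remark comparing the exact utilities is a harmless refinement of the same comparison, not a different argument.
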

\begin{proof}
We simply compare the lower bound on an agent's utility when accepting an offer with an upper bound on an agent's utility when rejecting an offer to find that agent $i$ will always accept when
$$p_i - c_i(\epsilon_i^1 + \epsilon_i^2) \geq 0.$$
\end{proof}
\begin{remark}
  Note that this lemma is tight exactly when agent types are
  uncorrelated with agent costs -- i.e., when $g_i(\epsilon) =
  0$. When agent types are highly correlated with costs, then
  rejecting an offer becomes more costly, and agents may accept
  take-it-or-leave-it offers at lower prices.
\end{remark}
We make no claims about how agents respond to offers $(p_i,
\epsilon_i^1, \epsilon_i^2)$ for which $p_i < c_i(\epsilon_i^1 +
\epsilon_i^2)$. Note that since agents can suffer negative utility
even by rejecting offers, it is possible that they will accept offers
that lead to experiencing negative utility. Thus, in our setting,
take-it-or-leave-it offers are not necessarily \emph{truthful} in the
standard sense. Nevertheless, Lemma \ref{lem:truthful} will provide a
strong enough guarantee for us of \emph{one-sided truthfulness}:
we can guarantee that rational agents will accept all offers that
guarantee them non-negative utility.

Note that our mechanisms will satisfy only a relaxed notion of
\emph{individual rationality}: we have not endowed agents with the
ability to avoid having been given a take-it-or-leave it offer, even
if both options (taking or rejecting) would leave her with negative
utility. Agents who reject take-it-or-leave-it offers can experience
negative utility in our mechanism because their rejection decision is
observed and used in a (differentially private)
computation. Once the take-it-or-leave-it offer has been presented,
agents are free to behave selfishly. We feel that both
of these relaxations (of truthfulness and individual rationality) are
well motivated by real world mechanisms in which surveyors may
approach individuals in public, and crucially, they are necessary in
overcoming the impossibility result in \cite{GR11}.

Most of our analysis holds for arbitrary cost functions $c_i$, but we do a benchmark cost comparison assuming \emph{linear} utility functions of the form $c_i(\epsilon) = v_i\epsilon$, for some quantity $v_i$.
\subsection{Accuracy}
Our mechanism is designed to be used by a data analyst who wishes to
compute some statistic about the private type distribution of the
population. Specifically, the analyst gives the mechanism some
function $Q:T\rightarrow [0,1]$, and wishes to compute $a = \E_{t_i
  \sim \mathcal{D}}[Q(t_i)]$, the average value that $Q$ takes among
the population of agents $\mathcal{D}$. The analyst wishes to
obtain an \emph{accurate} answer, defined as follows:
\begin{definition}
A randomized algorithm, given as input access to a population oracle $\mathcal{O}_\mathcal{D}$ which outputs an estimate $M(\mathcal{O}_\mathcal{D}) = \hat{a}$ of a statistic $a = \E_{t_i \sim \mathcal{D}}[Q(t_i)]$ is $\alpha$-accurate if:
$$\Pr[|\hat{a}-a| > \alpha] < \frac{1}{3}$$
where the probability is taken over the internal randomness of the algorithm and the randomness of the population oracle.
\end{definition}
The constant $\frac{1}{3}$ is arbitrary, and is fixed only for
convenience. It can be replaced with any other constant value without
qualitatively affecting any of the results in this paper.

\subsection{Cost}
We will evaluate the cost incurred by our mechanism using a
bi-criteria benchmark: For a parametrization of our mechanism which
gives accuracy $\alpha$, we will compare our mechanism's 
cost to a benchmark algorithm that has perfect knowledge of each
individual's cost function, but is constrained to make every
individual the same take-it-or-leave-it offer (the same fixed price is
offered to each person in exchange for some fixed $\eps'$-differentially
private computation on her private type) while obtaining
$\alpha/32$ accuracy.\footnote{Note that we have made no attempt to optimize the constant.}
That is, the benchmark mechanism must be ``envy-free'', and may obtain better accuracy than we
do, but only by a constant factor. On the other hand, the benchmark mechanism has several advantages: it has full knowledge of each player's cost, and need not be concerned about sample error. For simplicity, we will state our
benchmark results in terms of individuals with linear cost functions.


\section{Mechanism and Analysis}
\subsection{The Take-It-Or-Leave-It Mechanism}

In this section we describe our mechanism, which we present in
Figure~\ref{Alg:harrass}. It is \emph{not} a direct revelation mechanism,
and instead is based on the ability to present take-it-or-leave-it
offers to uniformly randomly selected individuals from some
population. This is intended to model the scenario in which a surveyor
is able to stand in a public location and ask questions or present
offers to passers by (who are assumed to arrive randomly). Those
passing the surveyor have the freedom to accept or reject the offer
that they are presented, but they cannot avoid having the question
posed to them.

Our mechanism consists of two algorithms. Algorithm $A_1$ is run on samples from the population with privacy guarantee $\epsilon_0$, until it terminates at some final epoch $\hat{j}$; and then algorithm $A_2$ is run on $(\textrm{AcceptedSet}_{\hat{j}}, \textrm{EpochSize}(\hat{j}),\epsilon_0)$.

\begin{algorithm}
\caption{Algorithm $A_1$, the ``Harassment Mechanism''. It is parametrized by an accuracy level $\alpha$, and we view its \emph{input} to be the participation decision of each individual approached with a take-it-or-leave-it offer, and its \emph{observable output} to be the number of individuals approached, the payments offered, and the noisy count of the number of players who accepted the offer in the final epoch.}
\label{Alg:harrass}
\begin{algorithmic}
\STATE \textbf{Let} $\textrm{EpochSize}(j) \leftarrow \frac{100(\log j + 1)}{\alpha^2}$.
\STATE \textbf{Let} $j \leftarrow 1$.
\STATE \textbf{Let} $\epsilon_0 = \alpha$
\WHILE{\textbf{TRUE}}
    \STATE \textbf{Let} $\textrm{AcceptedSet}_j \leftarrow \emptyset$ and $\textrm{NumberAccepted}_j \leftarrow 0$
    and $\textrm{Epoch}_j \leftarrow \emptyset$
    \FOR{$i = 1$ to $\textrm{EpochSize}(j)$}
        \STATE \textbf{Sample} a new individual $x_i$ from $\mathcal{D}$.
        \STATE \textbf{Let} $\textrm{Epoch}_j \leftarrow \textrm{Epoch}_j \cup \{x_i\}$.
        \STATE \textbf{Offer} $x_i$ the take-it-or-leave it offer $(p_j, \epsilon_0, \epsilon_0)$ with $p_j = (1+\eta)^{j}$
        \IF{$i$ accepts}
            \STATE \textbf{Let} $\textrm{AcceptedSet}_j \leftarrow \textrm{AcceptedSet}_j \cup \{x_i\}$ and \\ ~~~~~~~~$\textrm{NumberAccepted}_{j} \leftarrow  \textrm{NumberAccepted}_{j} + 1$.
        \ENDIF
    \ENDFOR
    \STATE \textbf{Let} $\nu_{j} \sim \textrm{Lap}(1/\epsilon_0)$ and $\textrm{NoisyCount}_j = \textrm{NumberAccepted}_j + \nu_{j}$
    \IF{$\textrm{NoisyCount}_j \geq (1-\alpha/8)\textrm{EpochSize}(j)$}
        \STATE \textbf{Call} \textbf{Estimate}($\textrm{AcceptedSet}_j, \textrm{EpochSize}(j), \epsilon_0)$.
    \ELSE
        \STATE \textbf{Let} $j \leftarrow j+1$
    \ENDIF
\ENDWHILE
\end{algorithmic}
\end{algorithm}

\begin{algorithm}
\caption{The Estimation Mechanism. We view its \emph{inputs} to be the private types of each participating individual from the final epoch, and its \emph{output} is a single numeric estimate.}
\textbf{Estimate}$(\textrm{AcceptedSet}, \textrm{EpochSize}, \epsilon)$:
\begin{algorithmic}
\STATE \textbf{Let} $\hat{a} = \sum_{x_i \in \textrm{AcceptedSet}}Q(x_i) + \textrm{Lap}(1/\epsilon)$
\STATE \textbf{Output} $\hat{a}/\textrm{EpochSize}$.
\end{algorithmic}
\end{algorithm}

\subsection{Privacy}
Note that our mechanism offers the same $\epsilon_0$ in every take-it-or-leave-it offer it makes.

\begin{theorem}
\label{thm:privacy1}
The Harassment Mechanism is $\eps_0$-differentially private with respect to the participation decision of each individual approached.
\end{theorem}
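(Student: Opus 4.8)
The plan is to reduce the whole statement to a single application of the Laplace mechanism followed by post-processing (Fact~\ref{fact:postprocessing}). Fix an individual $i$ and two neighboring profiles of participation decisions that agree everywhere except on $i$'s bit $b_i\in\{0,1\}$. We must show that flipping $b_i$ changes the distribution of the \emph{observable output} of $A_1$ — the number of individuals approached, the prices offered, and $\textrm{NoisyCount}_{\hat j}$ — by at most a factor $e^{\epsilon_0}$.

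I would first record two structural facts. (1) Since $A_1$ samples individuals \emph{without replacement} and the sequence of sampled individuals is drawn independently of every participant's decision, individual $i$ is approached in exactly one epoch, call it $j_i$; moreover the multiset of individuals approached in each epoch is independent of the decision profile, so $b_i$ affects only $\textrm{NumberAccepted}_{j_i}$, and it changes that count by exactly $1$. In particular, flipping $b_i$ cannot change whether $i$ is approached or in which epoch (epoch $j_i$ running depends only on decisions of individuals in earlier epochs), so the two profiles really are neighboring inputs. (2) The observable output is a deterministic function of the sequence $(\textrm{NoisyCount}_j)_{j\ge 1}$: the final epoch is $\hat j=\min\{j:\textrm{NoisyCount}_j\ge(1-\alpha/8)\textrm{EpochSize}(j)\}$, and once $\hat j$ is known the number of individuals approached ($\sum_{j\le \hat j}\textrm{EpochSize}(j)$) and the prices offered ($p_j=(1+\eta)^j$ for $j\le\hat j$) are determined, while the reported count is $\textrm{NoisyCount}_{\hat j}$ itself.

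Given these, the argument runs as follows. Condition on the (decision-independent) realization of which individuals are sampled into which slot of which potential epoch, so that the only remaining randomness is the i.i.d.\ noise $\nu_j\sim\textrm{Lap}(1/\epsilon_0)$; under this conditioning the vector $(\textrm{NumberAccepted}_j)_j$ is a fixed function of the decision profile with $\ell_1$-sensitivity exactly $1$ with respect to $b_i$ (only coordinate $j_i$ moves, and by $1$). Hence, by Theorem~\ref{thm:laplace-privacy} — equivalently, because a unit shift of a $\textrm{Lap}(1/\epsilon_0)$ variable changes densities by at most $e^{\epsilon_0}$ and the remaining $\nu_j$ are independent of $b_i$ — the released vector $(\textrm{NoisyCount}_j)_j$ is $\epsilon_0$-differentially private with respect to $b_i$; averaging over the conditioning (which does not depend on $b_i$) preserves this. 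Finally, by fact (2) the observable output of $A_1$ is a deterministic function of this vector, so Fact~\ref{fact:postprocessing} yields that $A_1$ is $\epsilon_0$-differentially private with respect to $b_i$.

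The one point needing care — the main obstacle — is that the number of epochs actually executed, $\hat j$, is data-dependent, so one cannot literally name "the vector of noisy counts" as an output of $A_1$. The fix is exactly the thought experiment above: introduce the auxiliary mechanism that for every $j$ (up to $\infty$ if necessary) samples $\textrm{EpochSize}(j)$ fresh individuals, offers price $p_j$, and outputs $\textrm{NoisyCount}_j$; this is well defined and $\epsilon_0$-DP with respect to $b_i$ by the Laplace argument, and $A_1$'s observable output is a deterministic post-processing of its output (the post-processing inspects only the prefix up to the first threshold crossing and may be defined arbitrarily on sequences that never cross). Note that we therefore do not need $A_1$ to terminate almost surely for the privacy claim; termination is only relevant to the separate accuracy analysis.
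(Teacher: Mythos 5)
Your proof is correct and follows essentially the same route as the paper's: the observable output of $A_1$ is a post-processing of the noisy counts, which are obtained by adding independent $\textrm{Lap}(1/\epsilon_0)$ noise to the sensitivity-$1$ vector of per-epoch acceptance counts. Your additional care with the data-dependent stopping time (the auxiliary mechanism that releases noisy counts for all potential epochs, with $A_1$'s output recovered by post-processing) rigorously fills in a step the paper's terse proof leaves implicit, but the underlying argument is the same.
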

\begin{proof}
The observable output of $A_1$ is the total number of people approached, the payments offered, and the noisy count of the number of number of players who accepted the offer in the final epoch. The first two of these are functions only of the choice of the final epoch $j$ at which the algorithm halts. But this decision is made as a function only of the quantity $\textrm{NoisyCount}_j$, which preserves $\epsilon_0$-differential privacy by the properties of the Laplace mechanism, and the fact that the vector
$$(\textrm{NumberAccepted}_1,\ldots,\textrm{NumberAccepted}_j)$$ has sensitivity $1$.
\end{proof}

\begin{theorem}
The Estimation Mechanism is $\eps_0$-differentially private with respect to the participation decision and private type of each individual approached.
\end{theorem}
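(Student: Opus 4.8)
The plan is to recognize the Estimation Mechanism (Algorithm~\ref{Alg:harrass}'s companion, \textbf{Estimate}) as essentially a single invocation of the Laplace mechanism on a sensitivity-$1$ query, followed by deterministic post-processing. Recall that on input $(\textrm{AcceptedSet}, \textrm{EpochSize}, \epsilon_0)$ it releases $\hat a = \sum_{x_j \in \textrm{AcceptedSet}} Q(x_j) + \textrm{Lap}(1/\epsilon_0)$ and then outputs $\hat a / \textrm{EpochSize}$. Since differential privacy is a property of the mechanism, it suffices to fix the non-private inputs ($\textrm{EpochSize}$, $\epsilon_0$, and the identity of the final epoch, hence the set of individuals approached in it) and to track how the data of a single approached individual $i$ --- now the pair $(b_i, t_i)$ consisting of her participation decision and her private type --- can change the output distribution.

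First I would observe that individual $i$'s data enters \textbf{Estimate} in exactly one place: it determines whether $i \in \textrm{AcceptedSet}$ and, if so, contributes the single summand $Q(t_i)$ to the inner sum. (If $i$ was approached only in a non-final epoch, her data does not appear in the input to \textbf{Estimate} at all, and the claim is immediate.) Next I would bound the $\ell_1$-sensitivity of $D \mapsto \sum_{x_j \in \textrm{AcceptedSet}(D)} Q(x_j)$ with respect to individual $i$ by a short case analysis on the change from $(b_i, t_i)$ to $(b_i', t_i')$: if $i$ accepts under both, the sum changes by $|Q(t_i) - Q(t_i')|$; if $i$ flips between accepting and rejecting, the sum changes by $Q(t_i)$ or $Q(t_i')$; otherwise it does not change. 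Because $Q : T \to [0,1]$, every case yields a change of at most $1$, so the sensitivity is at most $1$.

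With the sensitivity bound in hand, Theorem~\ref{thm:laplace-privacy} (applied with $k=1$ and noise scale $1/\epsilon_0$) shows that $\hat a$ is $\epsilon_0$-differentially private with respect to $(b_i, t_i)$. Finally, dividing by $\textrm{EpochSize}$ --- an input parameter of \textbf{Estimate}, and therefore independent of individual $i$'s data in this analysis --- is a deterministic post-processing step, so Fact~\ref{fact:postprocessing} preserves the guarantee, and $\hat a / \textrm{EpochSize}$ is $\epsilon_0$-differentially private with respect to the participation decision and private type of every approached individual.

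I expect the one genuinely subtle point to be the bookkeeping in the second paragraph: a change in individual $i$'s participation bit could in principle change which epoch is final and hence $\textrm{EpochSize}$ and the membership of $\textrm{AcceptedSet}$, but those effects are precisely the ones already charged to $A_1$ in Theorem~\ref{thm:privacy1}; for the privacy accounting of \textbf{Estimate} in isolation they are fixed inputs, and the only free variation is the single summand $Q(t_i)$ that individual $i$ contributes or fails to contribute. Making that separation explicit, together with the case analysis giving sensitivity $\le 1$, is the heart of the argument; the remainder is just an appeal to the Laplace mechanism and to closure under post-processing.
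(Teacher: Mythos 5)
Your proposal is correct and follows essentially the same route as the paper, whose proof is simply the one-line appeal to the privacy of the Laplace mechanism. Your sensitivity-at-most-$1$ case analysis and the post-processing step (Fact~\ref{fact:postprocessing}) are exactly the details left implicit in that appeal, so this is an expanded version of the paper's argument rather than a different one.
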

\begin{proof}
The theorem follows directly from the privacy of the Laplace mechanism.
\end{proof}

Note that these two theorems, together with Lemma \ref{lem:truthful}, imply that each agent will accept its take-it-or-leave-it offer of $(p_j,\eps_0,\eps_0)$ whenever $p_j \geq c_i(2\eps_0)$.

\subsection{Accuracy}
\begin{theorem}
Our overall mechanism, which first runs the Harassment Mechanism and then hands the types of the accepting players from the final epoch to the Estimation Mechanism, is $\alpha$-accurate.
\end{theorem}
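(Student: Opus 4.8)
The plan is to decompose the final estimate $\hat a = (S_{\hat j} + \nu'_{\hat j})/\mathrm{EpochSize}(\hat j)$ produced when the Harassment Mechanism halts at its (random) final epoch $\hat j$, where $S_j = \sum_{x_i \in \mathrm{AcceptedSet}_j} Q(x_i)$ and $\nu'_j \sim \mathrm{Lap}(1/\epsilon_0)$ is the noise drawn inside \textbf{Estimate}. Writing $N_j = \mathrm{EpochSize}(j)$, $A_j = \mathrm{NumberAccepted}_j$, and $R_j = \sum_{x_i \in \mathrm{Epoch}_j \setminus \mathrm{AcceptedSet}_j} Q(x_i)$, we have the identity
\[
a - \hat a_j \;=\; \underbrace{\Bigl(a - \tfrac{1}{N_j}\sum_{x_i\in\mathrm{Epoch}_j}Q(x_i)\Bigr)}_{E_1^{(j)}} \;+\; \underbrace{\tfrac{R_j}{N_j}}_{E_2^{(j)}} \;-\; \underbrace{\tfrac{\nu'_j}{N_j}}_{E_3^{(j)}},
\]
with $0 \le E_2^{(j)} \le 1 - A_j/N_j$, so $|\hat a_j - a| > \alpha$ forces at least one of $|E_1^{(j)}| > \alpha/3$, $E_2^{(j)} > \alpha/3$, or $|E_3^{(j)}| > \alpha/3$. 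These three terms are, respectively, the sampling error from drawing only finitely many $x_i \sim \mathcal D$, the bias from omitting the $Q$-mass of individuals who reject the offer, and the estimation noise.

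The first step is to bound the bias term \emph{via the halting rule}, since Lemma~\ref{lem:truthful} is one-sided and gives no control over the acceptance probability at the epoch where we happen to stop. On the event that the mechanism halts in epoch $j$ we have $\mathrm{NoisyCount}_j = A_j + \nu_j \ge (1-\alpha/8)N_j$, where $\nu_j \sim \mathrm{Lap}(1/\epsilon_0)$ is the noise added to $\mathrm{NumberAccepted}_j$; hence $1 - A_j/N_j \le \alpha/8 + \nu_j/N_j$, and this event together with $E_2^{(j)} > \alpha/3$ forces $\nu_j > \tfrac{5\alpha}{24}N_j$, which (using $\epsilon_0 = \alpha$ and $\alpha^2 N_j = 100(\log j+1)$) occurs with probability $\tfrac12 e^{-\frac{500}{24}(\log j + 1)}$. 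The second step handles the other two terms uniformly in $j$: Hoeffding's inequality (each $Q(x_i)\in[0,1]$) gives $\Pr[|E_1^{(j)}| > \alpha/3] \le 2e^{-\frac{200}{9}(\log j+1)}$, and the Laplace tail gives $\Pr[|E_3^{(j)}| > \alpha/3] = e^{-\frac{100}{3}(\log j+1)}$. Each per-epoch bound has the form $C\,e^{-\beta}\,j^{-\beta}$ with $\beta > 1$ — this is exactly what the $\log j$ factor inside $\mathrm{EpochSize}$ buys us — so by a union bound $\sum_{j\ge1}\Pr[\text{halt in epoch }j \text{ and } |\hat a_j - a| > \alpha]$ is bounded by $\sum_{j\ge1}$ of the sum of these three terms, which is far below $1/3$.

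The third step confirms that the mechanism halts almost surely, so nothing is lost in the above bound. Because every $c_i$ is finite-valued and $p_j = (1+\eta)^j \to \infty$, the quantity $\underline\rho_j := \Pr_{t\sim\mathcal D}[c_t(2\epsilon_0) \le p_j]$ is nondecreasing in $j$ and tends to $1$ by monotone convergence; by Lemma~\ref{lem:truthful}, the true per-person acceptance probability $\rho_j$ satisfies $\rho_j \ge \underline\rho_j$, so we may fix $j_0$ with $\rho_j \ge 1 - \alpha/16 > 1 - \alpha/8$ for all $j \ge j_0$. For such $j$, $A_j \sim \mathrm{Bin}(N_j,\rho_j)$ has mean exceeding the halting threshold $(1-\alpha/8)N_j$ by a constant fraction of $N_j$, so the probability of halting in epoch $j$ conditioned on reaching it is bounded away from $0$ (and tends to $1$ as $N_j$ grows); since distinct epochs use independent fresh samples and noise, the probability of surviving all epochs $j \ge j_0$ is $0$.

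Putting the pieces together, $\Pr[\,|\hat a - a| > \alpha\,] \le \Pr[\text{never halts}] + \sum_{j\ge1}\Pr[\text{halt in epoch }j \text{ and } |\hat a_j - a| > \alpha]$, where the first term is $0$ and the sum is well below $1/3$, which proves $\alpha$-accuracy. The step I expect to be the main obstacle is the first one: the sample passed to \textbf{Estimate} is biased (it contains only individuals cheap enough to accept), and because the halting epoch is a stopping time and Lemma~\ref{lem:truthful} is one-directional, we cannot control this bias through $\rho_{\hat j}$; instead the argument must read a bound on the \emph{realized} count $A_{\hat j}$ directly off the halting condition, and it must survive a union bound over the infinitely many epochs at which the mechanism might stop — which is exactly why $\mathrm{EpochSize}(j)$ is designed to grow like $\log j$ and to carry the large constant $100/\alpha^2$.
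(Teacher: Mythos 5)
Your proof is correct and follows essentially the same route as the paper's: decompose the error into sampling error, the bias from rejecting individuals (controlled by reading the realized count off the halting condition together with a Laplace tail bound), and the estimation noise, then union bound over all possible halting epochs using the $\log j$ growth built into $\mathrm{EpochSize}(j)$. The only differences are cosmetic ($\alpha/3$ versus $\alpha/4$ splits, and folding the paper's separate bound on $|\nu_j|$ into the bias step), plus your explicit almost-sure-halting argument, which the paper leaves implicit.
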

\begin{proof}
We need simply control four sources of error, which we do in turn. Suppose that the algorithm halts and outputs an answer computed from epoch $\hat{j}$.

We first consider the effects of sample error, namely, the difference between the statistic among those individuals approached in an epoch (not just those who accepted our offer) and the true value of the statistic in the underlying population.
\begin{lemma}\label{lem:close-to-expectation}
Except with probability at most $1/12$, for each epoch $j \leq \hat{j}$ we have:
$$\left|\frac{1}{\textrm{EpochSize}(j)}\sum_{x_i \in  \textrm{Epoch}_j } Q(x_i) - \E_\mathcal{D}[Q(x)]\right| \leq \frac{\alpha}{4}.$$
\end{lemma}
\begin{proof}
This follows from a Chernoff bound and a union bound. Because the individuals in each epoch are sampled i.i.d. from $\mathcal{D}$, by the additive version of the Chernoff bound, we have for each epoch $j$:
\begin{eqnarray*}
\Pr\left[\left|\frac{1}{\textrm{EpochSize}(j)}\sum_{x_i \in  \textrm{Epoch}_j } Q(x_i) - \E_\mathcal{D}[Q(x)]\right| \geq \frac{\alpha}{4}\right] &\leq& 2\exp\left(-\frac{1}{8}\cdot\textrm{EpochSize}(j)\cdot\alpha^2\right) \\
&=& 2\exp\left(-\frac{100}{8}\log(j+1)\right) \\
&=& \frac{2}{\left(j+1\right)^{100/8}}
\end{eqnarray*}
By a union bound, we now have:
\begin{eqnarray*}
\Pr\left[\max_{j}\left|\frac{1}{\textrm{EpochSize}(j)}\sum_{x_i \in  \textrm{Epoch}_j } Q(x_i) - \E_\mathcal{D}[Q(x)]\right| \geq \frac{\alpha}{4}\right] &\leq& \sum_{j=1}^{\hat{j}} \frac{2}{\left(j+1\right)^{100/8}} \\
&\leq& \sum_{j=1}^{\infty} \frac{2}{\left(j+1\right)^{100/8}} \\
&<& \frac{1}{12}
\end{eqnarray*}
\end{proof}

We next consider the impact of the noise added for the purpose of maintaining differential privacy in the Harassment Mechanism.
\begin{lemma}
Except with probability at most $1/12$, for each epoch $j \leq \hat{j}$ we have:
$$|\nu_j| \leq \frac{\alpha}{8}\cdot \textrm{EpochSize}(j)$$
\end{lemma}
\begin{proof}
By the properties of the Laplace distribution, if random variable $Y \sim \textrm{Lap}(b)$, then: $\Pr[|Y| \geq t\cdot b] = \exp(-t)$. By a union bound, we have:
\begin{eqnarray*}
\Pr\left[\max_{j} |\nu_j| \geq \frac{\alpha}{8}\cdot \textrm{EpochSize}(j)\right] &\leq& \sum_{j=1}^{\infty} \Pr\left[|\nu_j| \geq \frac{\alpha}{8}\cdot \textrm{EpochSize}(j)\right] \\
&=& \sum_{j=1}^{\infty} \exp\left(-\frac{100}{8}\log(j+1)  \right) \\
&=& \sum_{j=1}^{\infty} \left(\frac{1}{j+1}\right)^{100/8} \\
&<& \frac{1}{12}
\end{eqnarray*}
\end{proof}
Note that there is an additional loss of accuracy due to the fact that we target only a $(1 - \alpha/8)$ participation level in each epoch.

We must also consider the impact of the differential privacy guarantee we give on the statistic output by the Estimation Mechanism.
\begin{lemma}
Except with probability at most $1/12$, we have:
$$\left|\hat{a} - \sum_{x_i \in \textrm{AcceptedSet}_j}Q(x_i)\right| \leq \frac{\alpha}{4}\textrm{EpochSize}(\hat{j})$$
\end{lemma}
\begin{proof}
By the properties of the Laplace distribution, we have:
\begin{eqnarray*}
\Pr\left[\left|\hat{a} - \sum_{x_i \in \textrm{AcceptedSet}_j}Q(x_i)\right| \geq \frac{\alpha}{4}\textrm{EpochSize}(\hat{j})\right] &=& \exp\left(-\frac{100}{4}\log(\hat{j}+1)\right) \\
&=& \left(\frac{1}{\hat{j}+1}\right)^{100/4} \\
&<& \frac{1}{12}
\end{eqnarray*}
\end{proof}
We can now finish the proof. Except with probability $3\cdot\frac{1}{12} = \frac{1}{4}$, the conclusions of all of the previous 3 lemmas hold. We therefore have by the triangle inequality:

\begin{align*}
\left|\frac{\hat{a}}{\textrm{EpochSize}(\hat{j})} - \E_{\mathcal{D}}[Q(x)]\right| \leq& \left|\frac{\hat{a}}{\textrm{EpochSize}(\hat{j})} - \frac{\sum_{x_i \in \textrm{AcceptedSet}_j}Q(x_i)}{\textrm{EpochSize}(\hat{j})}\right|\\
&+ \left|\frac{\sum_{x_i \in \textrm{AcceptedSet}_j}Q(x_i)}{\textrm{EpochSize}(\hat{j})}- \frac{\sum_{x_i \in \textrm{Epoch}_j}Q(x_i)}{\textrm{EpochSize}(\hat{j})}\right|\\
&+ \left|\frac{1}{\textrm{EpochSize}(j)}\sum_{x_i \in  \textrm{Epoch}_j } Q(x_i) - \E_\mathcal{D}[Q(x)]\right|\\
\leq& \frac{\alpha}{4} + \frac{\alpha}{4} + \frac{\alpha}{4} \\
< &\alpha,
\end{align*}

which completes the proof.
\end{proof}

Note that we have made no attempt to optimize the constants in this section.

\subsection{Benchmark Comparison}
In this section we compare the cost of our mechanism to the cost of an
omniscient mechanism that is constrained to make envy-free offers and
achieve $\theta(\alpha)$-accuracy. For the purposes of the cost
comparison, in this section we assume that the individuals our
algorithm approaches have linear cost functions: $c_i(\epsilon) =
v_i\epsilon$ for some $v_i \in \mathbb{R}^+$.

We will use a result of Ghosh and Roth, translated into our setting.
\begin{theorem}[\cite{GR11}]
\label{thm:GR}
Let $0 < \alpha < 1$. Given any finite sample of size $n$, any differentially private mechanism that is $\alpha/4$ accurate must select a set of agents $H \subseteq [n]$ such that:
\begin{enumerate}
\item $\epsilon_i \geq \frac{1}{\alpha n}$ for all $i \in H$
\item $|H| \geq (1-\alpha)n$
\end{enumerate}
\end{theorem}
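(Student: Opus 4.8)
The plan is to prove this as a lower bound by exhibiting a hard pair of samples. Define $H := \{i \in [n] : \epsilon_i \geq \tfrac{1}{\alpha n}\}$, where $\epsilon_i$ denotes the minimal differential privacy parameter of the mechanism $M$ with respect to agent $i$; then condition~(1) holds by construction, and the entire content is condition~(2), i.e., that the complementary ``cheap'' set $L := \{i : \epsilon_i < \tfrac{1}{\alpha n}\}$ satisfies $|L| \leq \alpha n$. Since the theorem only asserts something about mechanisms that are accurate for the analyst's (non-constant) query $Q$, fix two types $t^0, t^1 \in T$ with $Q(t^0) = 0$ and $Q(t^1) = 1$; these will be the only types appearing in the hard instance, so both databases we construct are legitimate samples of size $n$.

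Suppose toward a contradiction that $|L| > \alpha n$, and pick a subset $S \subseteq L$ with $|S| = k$ for an integer $k$ chosen so that $\tfrac{\alpha n}{2} < k < (\ln 2)\,\alpha n$ (such a $k$ exists because $|L| > \alpha n$ leaves room; the narrow-window and very small $\alpha n$ corner cases are routine and we make no attempt to optimize constants). Let $D$ be the sample in which every agent in $S$ has type $t^0$, let $D'$ be the sample in which every agent in $S$ has type $t^1$, and let $D$ and $D'$ agree on all agents outside $S$. Then the empirical statistics satisfy $a(D') - a(D) = \tfrac{k}{n} > \tfrac{\alpha}{2}$, so the two intervals $[a(D) - \tfrac{\alpha}{4},\, a(D) + \tfrac{\alpha}{4}]$ and $[a(D') - \tfrac{\alpha}{4},\, a(D') + \tfrac{\alpha}{4}]$ are disjoint.

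Now I combine $\tfrac{\alpha}{4}$-accuracy with group privacy. Let $E$ be the event that $M$'s output lands in $[a(D) - \tfrac{\alpha}{4}, a(D) + \tfrac{\alpha}{4}]$. Accuracy on input $D$ gives $\Pr[M(D) \in E] \geq \tfrac23$, while accuracy on input $D'$ together with disjointness of the two intervals gives $\Pr[M(D') \in E] \leq \tfrac13$. On the other hand $D$ and $D'$ differ only on the agents of $S$, so applying the $\epsilon_i$-differential privacy guarantee one agent at a time along a path of samples from $D$ to $D'$ (the standard telescoping/group-privacy argument, which follows directly from the definition of $\epsilon_i$-differential privacy) yields $\Pr[M(D) \in E] \leq \exp\!\big(\sum_{i \in S}\epsilon_i\big)\,\Pr[M(D') \in E]$. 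Since each $i \in S \subseteq L$ has $\epsilon_i < \tfrac{1}{\alpha n}$, we get $\sum_{i \in S}\epsilon_i < \tfrac{k}{\alpha n} < \ln 2$, hence $\Pr[M(D) \in E] < 2\Pr[M(D') \in E] \leq \tfrac23$, contradicting $\Pr[M(D)\in E] \geq \tfrac23$. Therefore $|L| \leq \alpha n$ and $|H| \geq (1-\alpha)n$.

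The only delicate point is balancing constants: the failure probability $\tfrac13$, the slack in accuracy parameters (the benchmark being held to the sharper $\tfrac\alpha4$ rather than $\alpha$), and the group-privacy blow-up $\exp(\sum_{i\in S}\epsilon_i)$ must be tuned so that the upper and lower bounds on $\Pr[M(D) \in E]$ genuinely collide; choosing $|S| \approx \tfrac{\alpha n}{2}$ is exactly what keeps that blow-up a constant strictly below $e^{\ln 2} = 2$ while still forcing the two accuracy intervals apart. Everything else is bookkeeping, and this is essentially the argument of Ghosh and Roth~\cite{GR11} transcribed into the notation of this paper.
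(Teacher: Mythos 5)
The paper itself does not prove this statement---it is imported from Ghosh and Roth \cite{GR11} and merely restated (``translated into our setting'')---so there is no internal proof to compare against. Your reconstruction is the right argument, and it is essentially the argument of \cite{GR11}: threshold the per-agent minimal privacy parameters to define $H$, and rule out a large ``cheap'' set $L$ by a group-privacy/packing argument on two databases that differ on a block $S \subseteq L$ of size just above $\alpha n/2$, chosen so that the two $\alpha/4$-accuracy intervals are disjoint while the telescoped blow-up $\exp\bigl(\sum_{i\in S}\epsilon_i\bigr)$ stays strictly below $2$. The hybrid/telescoping step is legitimate under the paper's per-individual definition of differential privacy (the guarantee holds for \emph{every} pair of databases that are neighbors with respect to $i$, so it applies at each step of the path), and the constants $1/3$, $2/3$, $\ln 2$ do collide as you claim.

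Two caveats that you dismiss as bookkeeping should instead be stated as hypotheses. First, you need $Q$ to attain both values $0$ and $1$ (the binary-data setting of \cite{GR11}); for a general $Q:T\rightarrow[0,1]$ with small spread the constants degrade, and for constant $Q$ the conclusion is simply false. Second, the ``narrow-window'' corner case is not routine: an integer $k$ with $\alpha n/2 < k < (\ln 2)\alpha n$ exists only when $\alpha n$ exceeds a modest constant, and in the regime $\alpha n = O(1)$ the theorem as literally stated actually fails. For instance, with $n=1$, the randomized-response mechanism that outputs $Q(t_1)$ with probability $0.7$ and $1-Q(t_1)$ otherwise is $\alpha/4$-accurate for every $\alpha<1$ yet is only $\ln(7/3)$-differentially private with respect to agent $1$, far below $1/(\alpha n)$ when $\alpha$ is small. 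So the statement carries an implicit lower bound on $\alpha n$; this is harmless where the paper invokes it (there $n \ge \textrm{EpochSize} \ge 100/\alpha^2$, hence $\alpha n \ge 100$), but your proof should make the assumption explicit rather than fold it into ``routine corner cases.''
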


Let $v(\alpha)$ be the smallest value $v$ such that $\Pr_{x_i \sim
  \mathcal{D}} [v_i \leq v] \geq 1-\alpha$. In other words,
$(v(\alpha)\cdot 2\epsilon, \epsilon,\epsilon)$ is the cheapest
take-it-or-leave-it offer for $\epsilon$-units of privacy that in the
underlying population distribution would be accepted with probability
at least $1-\alpha$,
 It follows immediately that:
 \begin{observation}
   Any $(\alpha/32)$-accurate mechanism that makes the same
   take-it-or-leave-it offer to every individual $x_i \sim
   \mathcal{D}$ must in expectation pay in total at least
   $\Theta(\frac{v(\frac{\alpha}{8})}{\alpha})$. Note that
   because here we assume cost functions are linear, this quantity is
   fixed independent of the number of agents the mechanism draws from
   $\mathcal{D}$.
 \end{observation}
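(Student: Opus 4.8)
The plan is to apply the Ghosh--Roth bound (Theorem~\ref{thm:GR}) to the sample drawn by the benchmark and then convert its structural conclusion---that many participants are used, each at a privacy level bounded below in terms of $n$---into a payment bound, exploiting linearity of the costs so that the factors of $n$ cancel.

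Fix any envy-free $(\alpha/32)$-accurate mechanism; say it draws $n$ individuals and offers each the same tuple $(p,\epsilon,\epsilon)$, chosen with full knowledge of the realized cost functions. By Lemma~\ref{lem:truthful} an agent with value $v_i$ is guaranteed to accept only when $p \ge c_i(2\epsilon)=2\epsilon v_i$, and the mechanism has no control over the behavior of the rest; since it must be accurate against worst-case behavior of the un-incentivized agents, it must in particular be accurate when the participating set is exactly $S=\{i : v_i \le p/(2\epsilon)\}$. Apply Theorem~\ref{thm:GR} to this $n$-agent instance with parameter $\alpha/8$ (permissible since $(\alpha/8)/4=\alpha/32$, modulo the $O(\alpha)$ sampling error between the empirical and population means, which only shifts constants): with probability at least $2/3$ we obtain $\epsilon \ge 8/(\alpha n)$ and $|S|\ge(1-\alpha/8)n$.

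It remains to lower bound $p$. Since the $v_i$ are i.i.d.\ from $\mathcal{D}$ and $v(\alpha/8)$ is defined as the least value with population mass $\ge 1-\alpha/8$ below it, the event $|S|\ge (1-\alpha/8)n$ forces the threshold $p/(2\epsilon)$ to be at least $v(\alpha/8)$ up to sampling fluctuations; a Chernoff bound (legitimate because $n=\Omega(1/\alpha^2)$ is already needed to estimate a non-trivial statistic to accuracy $\alpha/32$) makes this rigorous while only changing constants. Hence $p\ge 2\epsilon\,v(\alpha/8)$, and on the probability-$\ge 2/3$ event,
\[
\text{total payment} \;=\; p\cdot|S| \;\ge\; 2\epsilon\,v(\alpha/8)\cdot(1-\alpha/8)n \;\ge\; \tfrac{16(1-\alpha/8)}{\alpha}\,v(\alpha/8)\;=\;\Theta\!\left(\tfrac{v(\alpha/8)}{\alpha}\right),
\]
so the expectation is $\Omega(v(\alpha/8)/\alpha)$ as well. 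The cancellation of $n$ is precisely where linearity of $c_i$ is used, and it is why the bound does not depend on the number of agents sampled.

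The main obstacle is the mismatch between what Theorem~\ref{thm:GR} certifies (accuracy and structure relative to the sample) and what the benchmark is promised to deliver (accuracy relative to the population), compounded by the benchmark's freedom to fit $(p,\epsilon)$ to the realized $v_i$ rather than to the population quantile. Both are dispatched by the same concentration argument: for $n$ at least a suitable polynomial in $1/\alpha$---which is forced whenever the statistic is non-trivial to estimate---the empirical $(1-\alpha/8)$-quantile of the $v_i$ and the empirical mean of $Q$ are each within $O(\alpha)$ of their population counterparts, so the whole argument goes through with the constants degrading by at most a fixed factor, absorbed into $\Theta(\cdot)$.
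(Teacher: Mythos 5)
Your proposal follows essentially the same route as the paper: invoke Theorem~\ref{thm:GR} to force a $(1-\alpha/8)$ participation rate at privacy level $\epsilon = \Omega(1/(\alpha n))$, use the definition of $v(\alpha/8)$ to lower bound the per-person price via Lemma~\ref{lem:truthful}, and let linearity of the cost functions cancel the factor of $n$. The only difference is cosmetic: you route the price bound through the empirical quantile plus a concentration argument, whereas the paper defines the acceptance requirement directly at the population level and dismisses sample error in one line, so both arguments are the same at the same level of rigor.
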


\begin{proof}
  Note that if the benchmark algorithm were to incur sample error,
  this would only strengthen our lower bound.


By Theorem \ref{thm:GR}, any $\alpha/32$-accurate algorithm must offer a high enough price to get a participation rate of at least $(1-\alpha/8)$ at a privacy level of at least $\epsilon = 32/(\alpha n)$. The cost for such a participation rate is $32/(\alpha\cdot n)\cdot v(\alpha/8)$ by the definition of $v$. Suppose the mechanism samples $n$
individuals. Then the mechanism must in expectation pay
$v(\alpha/8)\cdot 32 /(\alpha n)$ to
$n(1 - \alpha/8)$ individuals.
\end{proof}

We now bound the expected cost of our mechanism, and compare it to our benchmark cost, $\textrm{BenchMarkCost} = \Theta(\frac{v(\frac{\alpha}{8})}{\alpha})$

\begin{theorem}
The total expected cost of our mechanism is at most:
\begin{eqnarray*}
\E[\textrm{MechanismCost}] &=& O\left(\log \log \left(\alpha\cdot v(\alpha/8)\right)\cdot \textrm{BenchMarkCost} + \frac{1}{\alpha^2}\right) \\
&=& O\left(\log \log \left(\alpha^2\cdot\textrm{BenchMarkCost}\right)\cdot \textrm{BenchMarkCost} + \frac{1}{\alpha^2}\right)
\end{eqnarray*}
\end{theorem}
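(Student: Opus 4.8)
The plan is to charge the whole cost to the mechanism's \emph{final} epoch $\hat j$, to show that $\hat j$ is, with high probability, not much larger than a ``target epoch'' $j^*$ at which the offered price first becomes large enough to reliably trigger the halting test, and to translate $(1+\eta)^{j^*}$ into $v(\alpha/8)$ and hence $\textrm{BenchMarkCost}$. Throughout, $\eta$ is treated as a fixed constant whose dependence is absorbed into the $O(\cdot)$, and no attempt is made to optimize constants. By Lemma~\ref{lem:truthful} together with the two privacy theorems, an agent drawn from $\mathcal D$ accepts the epoch-$j$ offer $(p_j,\eps_0,\eps_0)$ exactly when $p_j\ge c_i(2\eps_0)=2\alpha v_i$, so the population acceptance probability in epoch $j$ is the nondecreasing quantity $q_j:=\Pr_{x_i\sim\mathcal D}[v_i\le p_j/(2\alpha)]$, and $q_j\ge 1-\gamma$ as soon as $p_j\ge 2\alpha\,v(\gamma)$. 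Since the cost in epoch $j$ is $p_j\cdot\textrm{NumberAccepted}_j\le p_j\cdot\textrm{EpochSize}(j)$,
\[
\textrm{MechanismCost}\ \le\ \sum_{j=1}^{\hat j}(1+\eta)^j\cdot\frac{100(\log j+1)}{\alpha^2}\ =\ O\!\left((1+\eta)^{\hat j}\cdot\frac{\log\hat j+1}{\alpha^2}\right),
\]
because for constant $\eta$ consecutive summands have ratio lying in $[1+\eta,\,2(1+\eta)]$, so this geometric-type sum is dominated (up to the factor $\tfrac{1+\eta}{\eta}$) by its last term. It therefore suffices to bound $\E\big[(1+\eta)^{\hat j}(\log\hat j+1)/\alpha^2\big]$.

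Fix $j^*$ to be the least epoch at which $q_{j^*}$ exceeds the halting threshold $1-\alpha/8$ by a constant fraction of $\alpha$ (say $q_{j^*}\ge 1-\alpha/16$); by the above this holds once $(1+\eta)^{j^*}\ge 2\alpha\,v(\alpha/16)$, so $j^*=O\!\big(\log_{1+\eta}(\alpha\,v(\alpha/8))\big)$ and $(1+\eta)^{j^*}=O(\alpha\,v(\alpha/8))$ up to constants (replacing the various constant multiples of $\alpha$ by $\alpha/8$ only costs constant factors, for which the factor-$32$ slack in the benchmark's accuracy leaves room). Then: (i) on the event $\hat j\le j^*$ the summand is $O\!\big(\alpha v(\alpha/8)\cdot(\log j^*+1)/\alpha^2\big)=O\!\big(\log\log(\alpha v(\alpha/8))\cdot v(\alpha/8)/\alpha\big)=O\!\big(\log\log(\alpha^2\,\textrm{BenchMarkCost})\cdot\textrm{BenchMarkCost}\big)$, using $v(\alpha/8)=\Theta(\alpha\cdot\textrm{BenchMarkCost})$ from the Observation; and (ii) $\Pr[\hat j>j^*+k]\le\rho^{k+1}$ for a small constant $\rho<1/(2(1+\eta))$. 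For (ii), the epochs are independent (fresh samples, fresh Laplace noise), and for any $j\ge j^*$ failing to halt at $j$ means $\textrm{NumberAccepted}_j+\nu_j<(1-\alpha/8)\textrm{EpochSize}(j)$ while $q_j\ge q_{j^*}\ge 1-\alpha/16$; a Chernoff bound on the binomial $\textrm{NumberAccepted}_j$ plus the Laplace tail on $\nu_j$ (whose scale $1/\alpha$ is a vanishing fraction of $\textrm{EpochSize}(j)$) bounds the probability of this by $\exp(-\Omega(\alpha^2\textrm{EpochSize}(j)))=j^{-\Omega(1)}$, a small constant for $j\ge j^*$ once $j^*$ is bounded below by a constant. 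Combining (i) and (ii), the contribution of $\{\hat j>j^*\}$ is a convergent series $\sum_{k\ge1}\rho^{k+1}(1+\eta)^{j^*+k}(\log(j^*+k)+1)/\alpha^2$, which since $\rho(1+\eta)<1$ is geometrically dominated by, and of the same order as, the $\{\hat j\le j^*\}$ term (this also shows $\hat j<\infty$ almost surely, so the cost is well defined).

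It remains to handle the regime where $v(\alpha/8)$, and hence $\alpha^2\,\textrm{BenchMarkCost}$, is $O(1)$ and the $\log\log$ factor is not meaningful: there $j^*=O(1)$, so the cost is at most $\sum_{j=1}^{O(1)}(1+\eta)^j\cdot O(1/\alpha^2)=O(1/\alpha^2)$ (plus the geometric tail from (ii), again dominated), which is exactly the additive $1/\alpha^2$ term; substituting $\alpha\,v(\alpha/8)=\Theta(\alpha^2\,\textrm{BenchMarkCost})$ converts the first stated bound into the second. The main obstacle is step (ii): turning the per-epoch halting probability into a genuine constant requires $q_j$ at the target price to exceed $1-\alpha/8$ by enough to swamp both the $\Theta(\alpha/\sqrt{\textrm{EpochSize}(j)})$ sampling fluctuation in $\textrm{NumberAccepted}_j$ and the noise $\nu_j$ — delicate precisely for the earliest relevant epochs, where $\textrm{EpochSize}(j)=\Theta(1/\alpha^2)$ makes these fluctuations themselves $\Theta(\alpha)$; this is where one must lean on the constant slack from the $32$-factor benchmark and on treating the ``$v(\alpha/8)$ tiny'' case separately via the $1/\alpha^2$ term.
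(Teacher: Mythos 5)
Your overall architecture (per-epoch cost grows geometrically, charge everything to a target epoch $j^*$, show epochs beyond $j^*$ halt with constant probability so the tail is geometric) is the same as the paper's. But there is a genuine gap in how you define $j^*$ and certify halting. You require the acceptance probability at $j^*$ to exceed the halting threshold $1-\alpha/8$ by a margin (you take $q_{j^*}\ge 1-\alpha/16$) so that a Chernoff bound gives a constant per-epoch halting probability; this forces $(1+\eta)^{j^*}=\Theta\bigl(\alpha\, v(\alpha/16)\bigr)$, and your claim that ``replacing the various constant multiples of $\alpha$ by $\alpha/8$ only costs constant factors'' is false: $v(\cdot)$ is a quantile function of an arbitrary value distribution, and $v(\alpha/16)$ can exceed $v(\alpha/8)$ by an unbounded factor (e.g.\ a two-point distribution with mass $1-\alpha/10$ at value $1$ and the rest at an arbitrarily large value). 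Since the Observation pins $\textrm{BenchMarkCost}=\Theta\bigl(v(\alpha/8)/\alpha\bigr)$ to the $\alpha/8$ quantile, what your argument actually proves is a bound in terms of $v(\alpha/16)$, which does not imply the stated theorem; the factor-$32$ accuracy slack of the benchmark cannot rescue this, because that slack is already consumed in deriving the $1-\alpha/8$ participation requirement in the Observation.

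The paper's key device, which your proposal is missing, is a halting lemma that needs \emph{no margin above the threshold}: define $j^*$ by $p_{j^*}\ge \alpha\,v(\alpha/8)$, so for $j> j^*$ the acceptance probability is only guaranteed to be $\ge 1-\alpha/8$, exactly at the threshold. Then $\textrm{NumberAccepted}_j$ is at least $(1-\alpha/8)\textrm{EpochSize}(j)-1$ with probability at least $1/2$ (a mean/median argument for the binomial, no concentration margin needed), and conditioned on this the \emph{positive} tail of the Laplace noise pushes $\textrm{NoisyCount}_j$ over the threshold with probability at least $\tfrac12 e^{-\eps_0}\ge 3/10$; hence each epoch $j>j^*$ halts with probability at least $3/20$, and the tail is geometric once $\eta<3/17$. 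This keeps $j^*$ tied to the same quantile $v(\alpha/8)$ as the benchmark. Your Chernoff route also has the quantitative problem you yourself flag but do not resolve: at the earliest epochs $\textrm{EpochSize}(j)=\Theta\bigl((\log j+1)/\alpha^2\bigr)$ makes the Hoeffding exponent only a small constant times $(\log j+1)$, so with an $\alpha/16$ margin the per-epoch failure probability is a constant on the order of $e^{-0.2}$ to $e^{-0.8}$, not below the $\rho<1/(2(1+\eta))$ you need; enlarging the margin to fix this only worsens the quantile mismatch above. So the halting step must exploit the noise's favorable tail (or an equivalent device) rather than concentration above a strict margin.
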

\begin{remark}
Note that the additive $1/\alpha^2$ term is necessary only in the case in which $v(\alpha/8) \leq (1+\eta)/\alpha$: i.e., only in the case in which the \emph{very first} offer will be accepted by a $1-\alpha/8$ fraction of players with high probability. In this case, we have started off offering too much money, right off the bat. An additive term is necessary, intuitively, because we cannot compete with the benchmark cost in the case in which the benchmark cost is arbitrarily small.\footnote{We thank Lisa Fleischer and Yu-Han Lyu for pointing out the need for the additive term.}
\end{remark}
\begin{proof}
 Let $j^*$ be the minimum epoch number such that the price offered is at least $v \left(\alpha/8\right)\epsilon_0 = \alpha v(\alpha/8)$: $p_{j^*} = (1+\eta)^{j^*} \geq v(\alpha/8)\cdot \alpha$. This is the first round at which the price offered is high enough to guarantee an expected rate of participation above $(1-\alpha/8)$. Note that if $j^* > 1$ we also have $p_{j^*} \leq (1+\eta)v(\alpha/8)\cdot \alpha$. Our proof will proceed by arguing that $\E[\textrm{MechanismCost}]$ is within a small constant factor of the cost incurred during epoch $j^*$.

We first argue that the total cost incurred during all epochs $j < j^*$ is comparable to the cost incurred at epoch $j^*$, which is at most: $(1+\eta)^{j^*}\cdot \textrm{EpochSize}(j^*)$. We will write Cost$(i)=p_i\cdot |\textrm{AcceptedSet}_i|$ to denote the total cost of epoch $i$. Therefore by the definition of $j^*$ we have in the case in which $j^* > 1$:
\begin{eqnarray*}
\textrm{Cost}(j^*) &\leq& (1+\eta)v(\alpha/8)\alpha\cdot\left(\frac{\log j^*+1}{\alpha^2}\right) \\
&=& \Theta\left(\frac{v(\alpha/8)\log j^*}{\alpha}\right) \\
&=& \Theta\left(\frac{v(\alpha/8)\log \log (\alpha \cdot v(\alpha/8))}{\alpha}\right) \\
&=& \Theta\left(\log \log (\alpha \cdot v(\alpha/8))\cdot \textrm{BenchMarkCost}\right)
\end{eqnarray*}
In the case in which $j^* = 1$, we have $\textrm{Cost}(j^*) = (1+\eta)\cdot \textrm{EpochSize}(1) = O(1/\alpha^2)$. Thus, in both cases we have:
$$\textrm{Cost}(j^*) \leq  O\left(\log \log (\alpha \cdot v(\alpha/8))\cdot \textrm{BenchMarkCost} + \frac{1}{\alpha^2}\right)$$

Therefore, the theorem will be proven if we can argue that $\E[\textrm{MechanismCost}] = O(\textrm{Cost}(j^*))$. The remainder of the proof will establish this claim.

\begin{theorem}
$\E[\textrm{MechanismCost}] = O(\textrm{Cost}(j^*))$ whenever $\eta$ is a constant such that $c_1 < \eta < \frac{3}{17}-c_2$ where $c_1$ and $c_2$ are constants bounded away from $0$.
\end{theorem}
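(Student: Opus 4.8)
The plan is to split $\textrm{MechanismCost}=\sum_{j=1}^{\hat j}\textrm{Cost}(j)$ — where $\hat j$ is the random halting epoch and $\textrm{Cost}(j)\le p_jN_j$, writing $N_j$ for $\textrm{EpochSize}(j)=100(\log j+1)/\alpha^2$ — into the contribution of epochs $j<j^*$ and that of epochs $j\ge j^*$, and to show each is $O(\textrm{Cost}(j^*))$ (reading $\textrm{Cost}(j^*)$ as the deterministic upper bound $p_{j^*}N_{j^*}=(1+\eta)^{j^*}N_{j^*}$, as in the outer proof). The pre-$j^*$ part is immediate: the prices $p_j=(1+\eta)^j$ grow geometrically while the $N_j$ grow only logarithmically, so $\sum_{j<j^*}p_jN_j\le N_{j^*}\sum_{j<j^*}(1+\eta)^j\le\tfrac{1}{\eta}(1+\eta)^{j^*}N_{j^*}=\tfrac{1}{\eta}\textrm{Cost}(j^*)$, which is $O(\textrm{Cost}(j^*))$ precisely because $\eta>c_1$ is bounded away from $0$.

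The content is in the post-$j^*$ part, and rests on two facts. First, for every $j\ge j^*$ the offered price is $p_j\ge p_{j^*}\ge\alpha\,v(\alpha/8)$, so by Lemma~\ref{lem:truthful} (and the privacy theorems, which make the offer $(p_j,\eps_0,\eps_0)$ acceptable whenever $p_j\ge c_i(2\eps_0)$) each freshly sampled individual in epoch $j$ accepts independently with probability $r_j\ge 1-\alpha/8$; hence $\E[\textrm{NumberAccepted}_j]=r_jN_j$ is at least the halting threshold $(1-\alpha/8)N_j$. Second — the crux — there is an absolute constant $\rho>0$ with $\Pr[\textrm{NoisyCount}_j\ge(1-\alpha/8)N_j]\ge\rho$ for all $j\ge j^*$: indeed $\nu_j\ge 0$ with probability $\tfrac12$, and independently the binomial $\textrm{NumberAccepted}_j$ is at least its own mean (which dominates the threshold) with probability bounded below by an absolute constant, and on the intersection of these events the mechanism halts. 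Since epochs use fresh independent draws, $\{\hat j\ge j^*+k\}$ is contained in the event that epochs $j^*,\dots,j^*+k-1$ all failed to halt, so $\Pr[\hat j\ge j^*+k]\le(1-\rho)^k$.

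Because $\textrm{Cost}(j^*+k)$ depends only on epoch $j^*+k$'s draws it is independent of $\{\hat j\ge j^*+k\}$, so the expected post-$j^*$ cost is
\[
\sum_{k\ge 0}\E\big[\textrm{Cost}(j^*+k)\mathbf{1}[\hat j\ge j^*+k]\big]\le\sum_{k\ge 0}(1-\rho)^k p_{j^*+k}N_{j^*+k}=(1+\eta)^{j^*}N_{j^*}\sum_{k\ge 0}\big((1-\rho)(1+\eta)\big)^k\frac{N_{j^*+k}}{N_{j^*}}.
\]
The ratios $N_{j^*+k}/N_{j^*}=(\log(j^*+k)+1)/(\log j^*+1)$ grow only logarithmically in $k$, so this series converges to an absolute constant — making the total $O(\textrm{Cost}(j^*))$ — exactly when $(1-\rho)(1+\eta)<1$ with enough slack to swallow the logarithmic factors. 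This is the source of the admissible window: the upper endpoint is $\tfrac{1}{1-\rho}-1=\tfrac{3}{17}$ for the particular $\rho$ the anti-concentration estimate yields, and $c_2$ is the slack that keeps the series bounded. Combined with the pre-$j^*$ bound this proves $\E[\textrm{MechanismCost}]=O(\textrm{Cost}(j^*))$.

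The hard part is the uniform anti-concentration claim — a single constant $\rho$, good for every $j\ge j^*$, large enough that $(1-\rho)(1+\eta)<1$ over a nonempty $\eta$-range. The obstruction is the worst case where the type distribution places essentially no mass just above the $(1-\alpha/8)$-quantile of the values $v_i$, so $r_j$ can hover arbitrarily close to $1-\alpha/8$ for a long run of epochs past $j^*$; then there is no safety margin between $\E[\textrm{NumberAccepted}_j]$ and the threshold, and one must argue that the mean-zero quantity $(\textrm{NumberAccepted}_j-\E[\textrm{NumberAccepted}_j])+\nu_j$ is nonnegative with constant probability. The binomial skew is most pronounced as $r_j\to 1$, but in that regime $\textrm{NumberAccepted}_j$ is concentrated near $N_j$ and comfortably clears the threshold, while for intermediate $r_j$ a normal (or direct Poisson-tail) approximation makes the constant explicit; everything else is geometric-series bookkeeping.
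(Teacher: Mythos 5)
Your proposal is correct and follows the paper's own proof essentially step for step: the same split into epochs before and after $j^*$, the same geometric bound $\frac{(1+\eta)^{j^*}}{\eta}\mathrm{EpochSize}(j^*)$ on the early epochs (using $\eta > c_1$), and the same key lemma that each epoch $j \ge j^*$ halts with a constant probability around $3/20$, giving $\Pr[\hat j \ge j^* + k] \le (17/20)^k$ and the window $\eta < 3/17 - c_2$ from convergence of the geometric-times-logarithmic series. The only divergence is inside that halting-probability lemma: the paper needs only the binomial median fact ($\mathrm{NumberAccepted}_j$ within $1$ of the threshold with probability $\ge 1/2$) because it lets the Laplace noise exceed $1$ with probability $\tfrac12 e^{-\eps_0} \ge 3/10$, whereas you ask the binomial itself to clear the threshold with probability $\ge 3/10$ and use only $\nu_j \ge 0$; your anti-concentration sketch is sound in this parameter regime (the rejection-side threshold $\alpha\,\mathrm{EpochSize}(j)/8 \ge 12.5$ dominates the rejection mean, so a median-plus-mode argument gives a constant above $3/10$), but note that the generic claim ``a binomial exceeds its mean with constant probability'' is false without such a regime restriction, and the paper's use of the Laplace noise to absorb a deficit of $1$ sidesteps exactly this point.
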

\begin{proof}
We first argue that the contribution to the cost of epochs $j < j^*$
is small.
\begin{lemma}
$$\sum_{i=1}^{j^*-1}\textrm{Cost}(i) \leq \frac{(1+\eta)^{j^*}}{\eta}\cdot \textrm{EpochSize}(j^*)$$
\end{lemma}
\begin{proof}
\begin{eqnarray*}
\sum_{i=1}^{j^*-1}\textrm{Cost}(i) &=& \sum_{i=1}^{j^*-1}(1+\eta)^i|\textrm{AcceptedSet}_i| \\
&\leq& \textrm{EpochSize}(j^*)\sum_{i=1}^{j^*-1}(1+\eta)^i \\
&<& \frac{(1+\eta)^{j^*}}{\eta}\cdot \textrm{EpochSize}(j^*)
\end{eqnarray*}
\end{proof}

We next argue that the contribution to the expected cost of the algorithm of epochs $j >  j^*$ is small.
\begin{lemma}
For any epoch $j > j^*$, the probability that the algorithm reaches epoch $t$ before halting is at most $(17/20)^{t-j^*}$.
\end{lemma}
\begin{proof}
First recall that by definition of $j^*$, we have for any $j > j^*$
$$\Pr_{x_i \sim \mathcal{D}}[x_i \textrm{ accepts } (p_{j}, \epsilon_0, \epsilon_0)] \geq 1-\alpha/8$$
We therefore have:
$$\Pr[|\textrm{AcceptedSet}_j| < (1-\alpha/8)\cdot \textrm{EpochSize}(j) - 1] \leq \frac{1}{2}$$
Note that round $j$ will be the final round if
$\textrm{EpochSize}(j)+\nu_{j} \geq
(1-\alpha/8)\textrm{EpochSize}(j)$.

Conditioned on the event $|\textrm{AcceptedSet}_j| \geq (1-\alpha/8)\cdot \textrm{EpochSize}(j)-1$, we have by the properties of the Laplace distribution:
$$\Pr[\textrm{EpochSize}(j)+\nu_{j} \geq (1-\alpha/8)\textrm{EpochSize}(j)] \geq \frac{1}{2}\cdot \exp(-\epsilon_0) \geq \frac{3}{10}$$
Therefore, each round $j > j^*$ is the final round with probability at least $3/20$. Since each of these events is independent, the probability that the algorithm does not halt before epoch $t$ is at most $(17/20)^{t-j^*}$ as desired.
\end{proof}

Write $H_j$ for the event that the mechanism does not halt before round $j$. The expected cost of the mechanism is then at most:

\begin{align*}
\sum_{i=1}^{j^*}\textrm{Cost}(i) +& \sum_{j = j^*+1}^\infty (1+\eta)^j\cdot \textrm{EpochSize}(j)\cdot \Pr[H_j]\\
&\leq
 \frac{(1+\eta)^{j^*}}{\eta}\cdot \textrm{EpochSize}(j^*) + \sum_{j = j^*+1}^\infty (1+\eta)^j\cdot \textrm{EpochSize}(j)\cdot \Pr[H_j] \\
&\leq
\frac{(1+\eta)^{j^*}}{\eta}\cdot \textrm{EpochSize}(j^*) + \sum_{j = j^*+1}^\infty (1+\eta)^j\cdot \textrm{EpochSize}(j)\cdot(17/20)^{j-j^*} \\
&= O\left(\frac{(1+\eta)^{j^*}}{\eta}\cdot \textrm{EpochSize}(j^*)\right)
\end{align*}
whenever there exist constants $c_1, c_2$ bounded away from $0$ such that $c_1 < \eta < \frac{3}{17}-c_2$.
\end{proof}
Therefore, we have shown that
$$\E[\textrm{MechanismCost}] = O(\textrm{Cost}(j^*)) = O\left(\log \log (\alpha\cdot v(\alpha/8))\cdot \textrm{BenchMarkCost} + \frac{1}{\alpha^2}\right)$$
which completes the proof.
\end{proof}

\section{Discussion}
In this paper, we have proposed a method for accurately estimating a
statistic from a population that experiences cost as a function of
their privacy loss. The statistics we consider here take the form of
the expectation of some predicate over the population. We leave to
future work the consideration of other, nonlinear, statistics. We have
circumvented the impossibility result of \cite{GR11} by using a
mechanism empowered with the ability to approach individuals and make
them take-it-or-leave-it offers (instead of relying on a direct
revelation mechanism), and by relaxing the measure by which
individuals experience privacy loss. Moving away from direct
revelation mechanisms seems to us to be inevitable: if costs for
privacy can be correlated with private data, then merely asking for
individuals to report their costs is inevitably disclosive, for any
reasonable measure of privacy. On the other hand, we do not claim that
the model we use for how individuals experience cost as a function of
privacy is ``the'' right one. Nevertheless, we have argued that some
relaxation away from individuals experiencing privacy cost entirely as
a function of the differential privacy parameter of the entire
mechanism is inevitable (as made particularly clear in the setting of
take-it-or-leave-it offers, in which individuals in this model would
accept arbitrarily low offers). In particular, we believe that the
style of survey mechanism presented in this paper, in which the
mechanism may approach individuals with take-it-or-leave-it offers, is
realistic, and any reasonable model for how individuals value their
privacy should predict reasonable behavior in the face of such a
mechanism.

\bibliographystyle{alpha}
 \bibliography{buyingprivacy}

\newcommand{\etalchar}[1]{$^{#1}$}
\begin{thebibliography}{BDMN05}

\bibitem[BDMN05]{BDMN05}
A.~Blum, C.~Dwork, F.~McSherry, and K.~Nissim.
\newblock {Practical privacy: the SuLQ framework}.
\newblock {\em Proceedings of the twenty-fourth ACM SIGMOD-SIGACT-SIGART
  symposium on Principles of database systems}, pages 128--138, 2005.

\bibitem[CCK{\etalchar{+}}11]{CCKMV12}
Y.~Chen, S.~Chong, I.A. Kash, T.~Moran, and S.~Vadhan.
\newblock Truthful mechanisms for agents that value privacy.
\newblock {\em Arxiv preprint arXiv:1111.5472}, 2011.

\bibitem[DMNS06]{DMNS06}
C.~Dwork, F.~McSherry, K.~Nissim, and A.~Smith.
\newblock {Calibrating noise to sensitivity in private data analysis}.
\newblock In {\em Proceedings of the Third Theory of Cryptography Conference
  TCC}, volume 3876 of {\em Lecture Notes in Computer Science}, page 265.
  Springer, 2006.

\bibitem[Dwo08]{DworkSurvey}
C.~Dwork.
\newblock {Differential privacy: A survey of results}.
\newblock In {\em Proceedings of Theory and Applications of Models of
  Computation, 5th International Conference, TAMC 2008}, volume 4978 of {\em
  Lecture Notes in Computer Science}, page~1. Springer, 2008.

\bibitem[FJS10]{Feigenbaum}
J.~Feigenbaum, A.D. Jaggard, and M.~Schapira.
\newblock Approximate privacy: foundations and quantification.
\newblock In {\em Proceedings of the 11th ACM conference on Electronic
  commerce}, pages 167--178. ACM, 2010.

\bibitem[FL12]{FL12}
Lisa Fleischer and Yu-Han Lyu, 2012.
\newblock personal communication.

\bibitem[GLM{\etalchar{+}}10]{GLMRT10}
A.~Gupta, K.~Ligett, F.~McSherry, A.~Roth, and K.~Talwar.
\newblock {Differentially Private Combinatorial Optimization}.
\newblock In {\em Proceedings of the ACM-SIAM Symposium on Discrete
  Algorithms}, 2010.

\bibitem[GR11]{GR11}
A.~Ghosh and A.~Roth.
\newblock Selling privacy at auction.
\newblock In {\em EC 2011: Proceedings of the 12th ACM conference on Electronic
  commerce}, pages 199--208. ACM, 2011.

\bibitem[Hec79]{Heckman}
J.J. Heckman.
\newblock Sample selection bias as a specification error.
\newblock {\em Econometrica: Journal of the econometric society}, pages
  153--161, 1979.

\bibitem[MT07]{MT07}
F.~McSherry and K.~Talwar.
\newblock {Mechanism design via differential privacy}.
\newblock In {\em Proceedings of the 48th Annual Symposium on Foundations of
  Computer Science}, 2007.

\bibitem[NOS11]{NOS12}
K.~Nissim, C.~Orlandi, and R.~Smorodinsky.
\newblock Privacy-aware mechanism design.
\newblock {\em Arxiv preprint arXiv:1111.3350}, 2011.

\bibitem[NST12]{NST12}
K.~Nissim, R.~Smorodinsky, and M.~Tennenholtz.
\newblock Approximately optimal mechanism design via differential privacy.
\newblock In {\em ITCS 2012: Proceedings of the 3rd Innovations in Computers
  Science Conference}, 2012.

\bibitem[RS12]{RS12}
A.~Roth and G.~Schoenebeck.
\newblock Conducting truthful surveys, cheaply.
\newblock 2012.
\newblock Manuscript.

\bibitem[Xia11]{Xia11}
D.~Xiao.
\newblock Is privacy compatible with truthfulness?
\newblock 2011.
\newblock manuscript.

\end{thebibliography}
\end{document}